\theoremstyle{definition}
\newtheorem{remark}{Remark}
\newtheorem{theorem}{\bf Theorem}[section]
\title{%
Asymptotic assessment of distribution voltage profile\\
using a nonlinear ODE model\thanks{Y.S. acknowledges supports from Japan Science and Technology Agency, Core Research for Evolutional Science and Technology (JST-CREST) Program \#JP-MJCR15K3, and from Japan Society for the Promotion of Science (JSPS), KAKENHI Grant \#20K04552.}
}%
\author{
Haruki Tadano, 
Yoshihiko Susuki\footnote{Corresponding author, {\tt susuki@eis.osakafu-u.ac.jp}}, 
and Atsushi Ishigame\footnote{They are with Department of Electrical and Information Systems, Osaka Prefecture University, 1-1 Gakuen-cho, Naka-ku, Sakai, 599-8631 Japan.}
}%
\date{}
\begin{document}
\maketitle

\begin{abstract}
   The promising increase of Electric Vehicles (EVs) in our society poses a challenging problem on the impact assessment of their charging/discharging to power distribution grids.
   This paper addresses the assessment problem in a framework of nonlinear differential equations.
   Specifically, we address the nonlinear ODE (Ordinary Differential Equation) model for representing the spatial profile of voltage phasor along a distribution feeder, which has been recently introduced in literature.
   The assessment problem is then formulated as a two-point boundary value problem of the nonlinear ODE model.
   In this paper we then derive an asymptotic charcterisation of solutions of the problem through the standard regular perturbation method.
   This provides a mathematically-rigor and quantitative method for assessing how the charging/discharging of EVs affects the spatial profile of distribution voltage.
   Effectiveness of the asymptotic charcterisation is established with simulations of both simple and practical configurations of the power distribution grid.
\end{abstract}


\section{Introduction}
\label{sec:intro}

\emph{Technological motivation.---}The high penetration of Electric Vehicles (EVs) is a promising future in our society \cite{Net-Zero}.
It is reported in \cite{GlobalOutlook} that the power demand from global EV fleet reached the total electricity consumption in Germany and the Netherlands in 2017.
In this, concerns with their substantial impacts to power distribution grids have raised such as congestion management and voltage amplitude regulation: see, e.g., \cite{Ipakchi,Arias:2019}.
In the Nordic region where EVs are penetrating in advance, it is pointed out in \cite{NordicOutlook} that there is a possibility of overloading distribution transformers in urban areas due to EV charging in the future.

The concerns pose a problem on the impact assessment of EV charging to power distribution grids.
There are multiple reasons why it is now important and challenging.
A new technology called \emph{fast charging} has been developed with the high increase of the amount of charged power with grid-faced inverters \cite{GlobalOutlook}.
Also, the so-called ancillary service using a cooperative use of a large population of EVs has been developed to provide the fast responsiveness of frequency control in power transmission grids: see, e.g., \cite{Kempton,Tomic,Ota}.
This kind of technology related to vehicles is generally referred to as V2X \cite{Toh}.
An EV is regarded as an autonomously moving battery in the spatial domain and can conduct the charging and discharging (in principle) \emph{anywhere} in a distribution grid, which has a large number of spatially distributed connecting points such as households, charging points in shopping malls, and charging stations.
This is completely different from other Distributed Energy Resources (DERs) such as Photo-Voltaic (PV) generation units, which do not move spatially.
The impact assessment is thus important to solve from the new power technologies and challenging as a new problem arising in a mixed domain of power and transportation systems.

The impact, in particular, to the voltage amplitude of power distribution feeders, is historically assessed with the so-called power-flow equation that is an algebraic mathematical model for the discretized evaluation of distribution voltage: see, e.g., \cite{Kersting}.
Although the model is nonlinear, it has been widely used for the impact assessment due to EV charging: see \cite{Clement1,Clement2,Yilmaz:2013,Arias:2019,Dixon:2020} and references therein.
However, the assessment is numerics-based, still computationally costly, and does not provide information of the impact with a clear reference to its physical origin.
In particular, it is hard for us to gain quantitative measures on the spatial impact on distribution voltage, such as how far EV charging at a particular location affects the distribution voltage, which is crucial to the current assessment problem regarding the autonomously moving battery.
A new modeling of the spatial profile of distribution voltage---distribution voltage profile---and associated assessment methodology are therefore required.

\emph{Purpose and contributions.---}The purpose of this paper is to solve the assessment problem in a framework of nonlinear differential equations.
Specifically, we address the nonlinear Ordinary Differential Equation (ODE) model for representing the distribution voltage profile derived by Chertkov \emph{et al.}  \cite{Chertkov}.
Unlike the power-flow equation, the nonlinear ODE model is capable of representing the intrinsic spatial (continuous in space) characteristics of distribution voltage profile.
The nonlinear ODE model therefore explicitly keeps spatial information of (balanced) distribution grids and hence enables us to quantify the spatial impact of EVs on the distribution voltage profile.
The nonlinear ODE is used for evaluating and mitigating the impact of DERs including PV units and EVs \cite{Baek,Susuki1,Susuki2,Mizuta2}.
In \cite{Chertkov} the authors formulate the assessment problem as a \emph{Two-Point Boundary Value} (TPBV) problem of the nonlinear ODE, and in \cite{Baek,Susuki1} the authors provide a numerical scheme for approximately deriving its solution via discretisation.
The boundary value problem of nonlinear ODEs has
been historically studied in applied and computational mathematics \cite{Keller,Domokos,Chowdhury}.
In this paper, as the theoretical foundation of the preceding work \cite{Baek,Susuki1,Susuki2,Mizuta2}, we characterize the solution of the nonlinear TPBV problem using the regular perturbation technique \cite{Guckenheimer} and derive a sequence of Initial Value (IV) problems of linear ODEs whose solutions asymptotically approximate the original solution.
This has benefits from the technological viewpoint: for instance, as shown in this paper, it enables us to quantify the impact of EV charging in a separation manner from the those of loads and others DERs.
This is never archived with the power-flow equations and one of the novelty of our ODE approach.
Effectiveness of the asymptotic charcterisation is established with simulations of both simple and practical configurations of the distribution grid.

The contributions of the paper are three-fold.
First, we newly derive an asymptotic representation of the distribution voltage profile by applying the regular perturbation technique to the nonlinear ODE model.
It enables us to approximately evaluate the profile, which is the solution of the nonlinear TPBV problem, by solving solutions of the associated IV problems of linear ODEs.
Needless to say, the linear ODEs are simple to solve both analytically and numerically, and are thus direct to the assessment in complex power grids.
Second, to provide the mathematical background of the representation, we collect a series of proofs for existence of solutions for the nonlinear TPBV problem and associated IV problems of linear ODEs.
A regularity result for the solutions is also proved.
Third, we demonstrate effectiveness of the representation with numerical simulations of simple and practical configurations of the distribution grid.
Preliminary work of this paper is presented in \cite{Tadano} as a non-reviewed report in a domestic conference.
This paper is a substantially enhanced version of \cite{Tadano} by newly adding a series of proofs for the existence and regularity of solutions and presenting a new set of simulation results for the practical power grid model.

{\emph{Organisation of this paper.---}Section~\ref{sec:ODE} introduces the nonlinear ODE model for distribution voltage profile and states the TPBV problem of the nonlinear ODE.
In Section~\ref{sec:perturb}, we derive the asymptotic representation of solutions of the nonlinear problem using the standard perturbation technique.
A series of theoretical results on solutions of the original nonlinear TPBV and associated linear IV problems are also derived.
Section~\ref{sec:numerics} presents numerical simulations to validate the asymptotic expansion result in Section~\ref{sec:perturb}.
The conclusion is made in Section~\ref{sec:outro} with a brief summery and future directions.
%

\section{ODE Model of Distribution Voltage Profile}
\label{sec:ODE}
\begin{figure}[t]
\centering
\includegraphics[width=0.7\hsize]{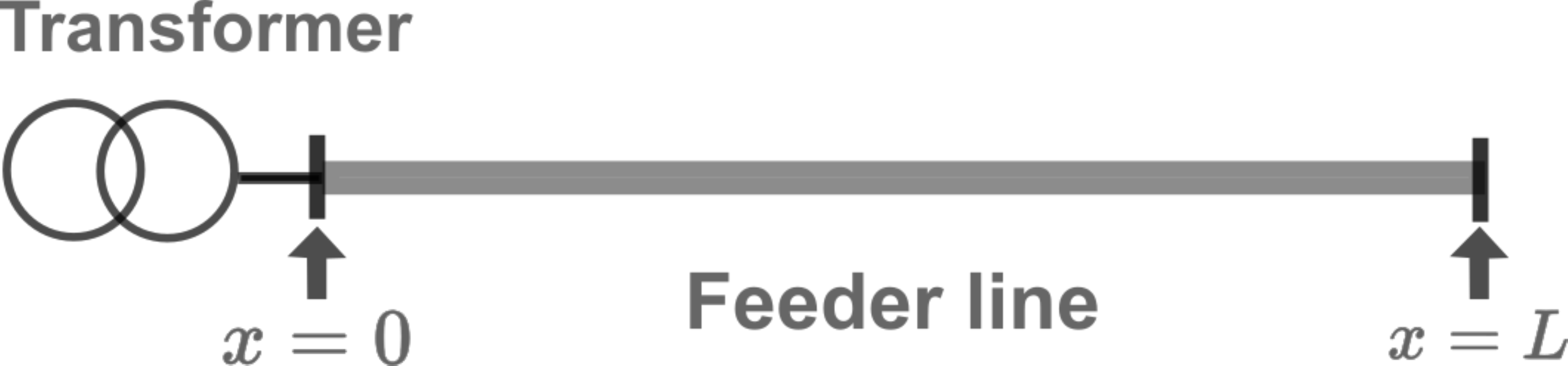}
\caption{Balanced, straight-line distribution feeder that starts a substation transformer (bank) and ends at a non-loading point.}
\label{fig:simplefeeder}
\end{figure}

At the beginning, we introduce a mathematical approach to model the voltage profile of distribution systems 
in \cite{Chertkov}.
In this paper, our concerns pose a problem on the impact assessment of the spatially distributed EV charging at each location to distribution systems.
Thus, it becomes relevant to consider the voltage profile starting at a distribution substation (bank) that is continuous in space (length).
Now consider a single distribution feeder shown in Figure~\ref{fig:simplefeeder}, starting at a transformer where 
the origin of the one-dimensional displacement (location) is introroduced.
In AC electrical networks, 
phasor representations of voltage amplitide and phase are used as physical quantities. 
The voltage phasor at the location $x$ is represented with $v(x)\ee^{\ii\theta(x)}$, where $\ii$ stands for the imaginary unit, $v(x)$ the voltage amplitude [V], and $\theta(x)$ the voltage phase [rad].
At the starting point $x=0$ in Figure~\ref{fig:simplefeeder}, due to voltage regulation at the substation, we naturally set $v(0)$ to be constant.
Throughout this paper, $v(0)$ coincides with unity in per-unit system and $\theta(0)$ with zero as a reference.
Then, the two functions $\theta$ and $v$ are described as the following nonlinear ODE \cite{Chertkov}:
\begin{subequations}
\label{eqn:ode_tv}
\begin{align}
-\frac{\dd}{\dd x}\left(v^2\frac{\dd\theta}{\dd x}\right) &= \frac{b(x)p(x)-g(x)q(x)}{g(x)^2+b(x)^2},
\\
\frac{\dd^2 v}{\dd x^2} &= v\left(\frac{\dd\theta}{\dd x}\right)^2
-\frac{g(x)p(x)+b(x)q(x)}{v(g(x)^2+b(x)^2)}.
\end{align}
\end{subequations}
The functions $g(x)$ and $b(x)$ in \eqref{eqn:ode_sw} are the position-dependent conductance and susceptance per unit-length [S/km] and assumed to be continuous in $x$.
Also, the functions $p(x)$ (or $q(x)$) is the active (or reactive) power flowing into the feeder (note that $p(x)>$ 0 indicates the positive active power flowing to the feeder at $x$).
In this paper, we refer to $p(x)$ and $q(x)$ as the power density functions in [W/km] and [Var/km].
Here, because Eq.~\eqref{eqn:ode_tv} is complicated, the two ancillary functions $s(x)$ and $w(x)$ are introduced as
\begin{subequations}
\label{eqn:ode_sw}
\begin{align}
s(x) &:= -v(x)^2\frac{\dd}{\dd x}\theta(x), 
\\
w(x) &:= \frac{\dd}{\dd x}v(x). \label{eqn:dw/dx}
\end{align}
\end{subequations}
The function $w(x)$ is called the voltage gradient [V/km].
At the end point $x=L$ in Figure~\ref{fig:simplefeeder} (namely, no feeder and load exist for $x>L$), by supposing that the end is not loaded, we have the conditions $s(L)=0$ and $w(L)=0$.
%
With the above, the nonlinear ODE model of distribution voltage profile is derived as
\begin{subequations}
\label{eqn:ode}
\begin{align}
\frac{\dd\theta}{\dd x} &= -\frac{s}{v^2}~~~(v\neq 0) \label{eqn:dt/dx},
\\
\frac{\dd v}{\dd x} &= w \label{eqn:dv/dx},
\\
\frac{\dd s}{\dd x} &= \frac{b(x)p(x)-g(x)q(x)}{g(x)^2+b(x)^2} \label{eqn:ds/dx},
\\
\frac{\dd w}{\dd x} &= \frac{s^2}{v^3}-\frac{g(x)p(x)+b(x)q(x)}{v(g(x)^2+b(x)^2)},
\label{eqn:dw/dx}
\end{align}
\end{subequations}
with boundary condition as
\begin{align}
\theta(0)=0, \quad v(0)=1, \quad s(L)=0, \quad w(L)=0.
\label{eqn:ode_boundary}
\end{align}
Therefore, as the assessment problem in this paper, we have the TPBV problem of the nonlinear ODE \eqref{eqn:ode} with \eqref{eqn:ode_boundary}.

In this paper, we compute numerical solutions of the TPBV problem of the nonlinear ODE \eqref{eqn:ode}-\eqref{eqn:ode_boundary} with the iterative method based on \cite{Susuki1}.
As far as we have used in this and \cite{Baek,Susuki1,Susuki2,Mizuta2,Tadano}, the iterative method based on discretisation works correctly (i.e., provides physically relevant outputs of voltage profile).
Note that the authors of \cite{Susuki2} report a comparison of the computation of distribution voltage profile with the nonlinear ODE model and the standard power-flow equation.
It is shown that the distribution voltage profiles computed with the two different mathematical models are consistent for a rudimental feeder configuration.

For simplicity of the above introduction, we have assumed that except for the substation, no voltage regulator device such as load ratio control transformer and step voltage regulator is operated.
Note that it is possible to include the effect of such voltage regulation devices in the ODE model: see \ref{sec:AppC}.

\section{Asymptotic Characterisation: Theoretical}
\label{sec:perturb}

This section is devoted to theoretical studies of the TPBV problem of the nonlinear ODE \eqref{eqn:ode}.
For this, we apply the regular perturbation technique to the nonlinear ODE \eqref{eqn:ode} and derive a series of IV problems of linear ODEs.
Also, we provide the theorem of existence of solutions of the nonlinear TPBV problem, which will be the basis of our asymptotic charcterisation.
In the rest of this section, we suppose that the independent variable $x$ belongs in a closed interval $[0,L]$ where $0<L<\infty$, and that the power density functions $p(x)$ and $q(x)$ take the following forms with a small positive parameter $\varepsilon$:
\begin{equation}
p(x)=\varepsilon\tilde{p}(x),\qquad q(x)=\varepsilon\tilde{q}(x).
\label{eqn:tilde}
\end{equation}
The parameter $\varepsilon$ controls the magnitude of the impacts of loads and EVs on the distribution feeder.
Also, the position-dependent function $\tilde{p}(x)$ (or $\tilde{q}(x)$) determines the spatial shape of demand of active power (or reactive power) along the feeder and can be taken from a suitable space of functions for the current theoretical study and numerical one in Section~\ref{sec:numerics}.
For the simplicity of the theoretical analysis, it is supposed that the conductance and susceptance of the feeder are constant in $x$:
\begin{equation}
g(x)=G,\qquad b(x)=B,
\label{eqn:hoge}
\end{equation}
where $G$ and $B$ are constants and can be determined from practice.

\subsection{Application of Regular Perturbation Method}
\label{subsec:RegularPerturb}

First of all, we apply the brute-force application of the regular perturbation technique \cite{Guckenheimer} to the nonlinear ODE \eqref{eqn:ode}.
It is supposed that the solutions $\theta(x)$, $v(x)$, $s(x)$, and $w(x)$ of the TPBV problem of the nonlinear ODE \eqref{eqn:ode} with the boundary condition \eqref{eqn:ode_boundary} are expanded with polynomials of $\varepsilon$, given by
\begin{equation}
   \left.
   \begin{array}{ccl}
      \displaystyle \theta(x) &\sim& \displaystyle 0+\varepsilon \theta_1(x) +\varepsilon^2 \theta_2(x)+\varepsilon^3\theta_3(x)+\varepsilon^4\theta_4(x)+\cdots
      \\\noalign{\vskip 1mm}
      \displaystyle v(x) &\sim& \displaystyle 1+\varepsilon v_1(x)+\varepsilon^2 v_2(x)+\varepsilon^3 v_3(x)+\varepsilon^4 v_4(x)+\cdots
      \\\noalign{\vskip 1mm}
      \displaystyle s(x) &\sim& \displaystyle 0+\varepsilon s_1(x)+\varepsilon^2 s_2(x)+\varepsilon^3 s_3(x)+\varepsilon^4 s_4(x)+\cdots
      \\\noalign{\vskip 1mm}
      \displaystyle w(x) &\sim& \displaystyle 0+\varepsilon w_1(x)+\varepsilon^2 w_2(x)+\varepsilon^3 w_3(x)+\varepsilon^4 w_4(x)+\cdots
   \end{array}
   \right\},
   \label{eqn:per}
\end{equation}
where the perturbation terms $\theta_i(x)$, $v_i(x)$, $s_i(x)$, and $w_i(x)$ ($i=1,2,\ldots$) are functions defined on the closed interval $[0,L]$ with the following conditions at $x=0$ and $x=L$:
\begin{equation}
   \left.
   \begin{array}{ccccccl}
      \displaystyle \theta_1(0) &=& \theta_2(0) &=& \cdots &=& 0
      \\\noalign{\vskip 1mm}
      \displaystyle v_1(0) &=& v_2(0) &=& \cdots &=& 0
      \\\noalign{\vskip 1mm}
      \displaystyle s_1(L) &=& s_2(L) &=& \cdots &=& 0
      \\\noalign{\vskip 1mm}
      \displaystyle w_1(L) &=& w_2(L) &=& \cdots &=& 0
   \end{array}
   \right\}.
   \label{eqn:ode0}
\end{equation}
Note that the degree of regularity of solutions of the TPBV problem with respect to $\varepsilon$ is hard to know from the generality of the choice of $\tilde{p}(x)$ and $\tilde{q}(x)$.
The above expansion is a formalism, and its justification remains to be solved.

Thus, the associated linear ODEs for each order of the perturbation terms are derived.
By substituting \eqref{eqn:per} into \eqref{eqn:ode} with \eqref{eqn:tilde} and \eqref{eqn:hoge}, and picking up the coefficients of the small parameter $\varepsilon$, the following linear ODEs for the 1st-order perturbation terms $\theta_1 (x)$, $v_1 (x)$, $s_1 (x)$, and $w_1 (x)$ are derived:
\vspace{0mm}
\begin{equation}
\left.
\begin{array}{ccl}
\displaystyle\frac{\dd \theta_1}{\dd x}&=&\displaystyle-s_1
\\\noalign{\vskip 2mm}
\displaystyle\frac{\dd v_1}{\dd x}&=&\displaystyle w_1
\\\noalign{\vskip 2mm}
\displaystyle\frac{\dd s_1}{\dd x}&=&\displaystyle\frac{B\tilde{p}(x)-G\tilde{q}(x)}{G^2+B^2}
\\\noalign{\vskip 2mm}
\displaystyle\frac{\dd w_1}{\dd x}&=&\displaystyle-\frac{G\tilde{p}(x)+B\tilde{q}(x)}{G^2+B^2}
\end{array}
\right\}.
\label{eqn:per1}
\end{equation}
That is, we have the IV problem of the linear ODE \eqref{eqn:per1} with the initial values \eqref{eqn:ode0}.
This problem is self-consistent in the sense that except for the unknown functions $\theta_1 (x)$, $v_1 (x)$, $s_1 (x)$, and $w_1 (x)$, all the parameters are given.
In the similar manner as above, the linear ODEs for the 2nd-, 3rd-, and 4th-order perturbation terms are derived as follows:
\begin{equation}
   \left.
   \begin{array}{ccl}
      \displaystyle \frac{\dd \theta_2}{\dd x}&=& \displaystyle 2v_1(x)s_1(x)
      \\\noalign{\vskip 2mm}
      \displaystyle \frac{\dd v_2}{\dd x}&=&w_2
      \\\noalign{\vskip 2mm}
      \displaystyle \frac{\dd s_2}{\dd x}&=&0
      \\\noalign{\vskip 2mm}
      \displaystyle \frac{\dd w_2}{\dd x}&=&
      s_1(x)^2+\displaystyle\frac{G\tilde{p}(x)+B\tilde{q}(x)}{G^2+B^2}v_1(x)
   \end{array}
   \right\},
   \label{eqn:per2}
\end{equation}
\begin{equation}
   \left.
   \begin{array}{ccl}
      \displaystyle \frac{\dd \theta_3}{\dd x} &=&
      4v_1(x)^2s_1(x)+\{2v_2(x)+v_1(x)^2\}s_1(x)
      \\\noalign{\vskip 2mm}
      \displaystyle \frac{\dd v_3}{\dd x} &=& w_3
      \\\noalign{\vskip 2mm}
      \displaystyle \frac{\dd s_3}{\dd x} &=& 0
      \\\noalign{\vskip 2mm}
      \displaystyle \frac{\dd w_3}{\dd x} &=&
      -3s_1(x)^2v_1(x)
      + \displaystyle\frac{G\tilde{p}(x)+B\tilde{q}(x)}{G^2+B^2}{
      v_2(x)}
   \end{array}
   \right\},
   \label{eqn:per3}
\end{equation}
and
\begin{equation}
   \left.
   \begin{array}{ccl}
      \displaystyle \frac{\dd\theta_4}{\dd x} &=&
      -2v_1(x) 
      \left[4v_1(x)^2s_1(x)+\{2v_2(x)+v_1(x)^2\}s_1(x)\right] \\
      & &
      -\{2v_2(x)+v_1(x)^2\} 
      \cdot 2v_1(x)s_1(x) \\
      & &
      -2\{v_3(x)+v_1(x) v_2(x)\} 
      \cdot (-s_1(x))
      \\ 
      \displaystyle \frac{\dd v_4}{\dd x} &=& w_4
      \\\noalign{\vskip 2mm}
      \displaystyle \frac{\dd s_4}{\dd x} &=& 0
      \\\noalign{\vskip 2mm}
      \displaystyle \frac{\dd w_4}{\dd x} &=&
      -3s_1(x)^2v_2(x)
      +\displaystyle\frac{G\tilde{p}(x)+B\tilde{q}(x)}{G^2+B^2}{v_3(x)}
   \end{array}
   \right\}.
   \label{eqn:per4}
\end{equation}
Namely, we have the IV problems of the linear ODEs \eqref{eqn:per2}, \eqref{eqn:per3}, and \eqref{eqn:per4} with the initial values \eqref{eqn:ode0}.
The $i$-th order IV problem becomes self-consistent if all the problems with order lower than the $i$-th have unique solutions with appropriate regularity.
This implies that it is possible to determine the perturbation terms in a recursive manner from the 1st order problem.
Also, since all the ODEs are linear, it is possible to derive analytical forms of the solutions that are direct to assessing the impacts of $\tilde{p}(x)$ and $\tilde{q}(x)$ on the 
variables such as the voltage amplitude $v(x)$, which is one of the benefits of our ODE approach from the technological viewpoint.

\subsection{Existence Results of Solutions}
\label{subsec:theory}

Here we collect a series of theoretical results on existence and regularity of solutions for the nonlinear TPBV problem and on existence, uniqueness, and regularity of solutions for the derived linear IV problems.
The notation $C^r[0,L]$ used below represents the space of $r$-times differentiable functions defined on the finite, closed-interval $[0,L]$.
The case $r=0$ implies the space of continuous functions on $[0,L]$

Before applying the perturbation technique, it is the first study to prove that the original nonlinear problem, namely, the nonlinear TPBV problem, has a solution under presence of the perturbation terms.
\begin{theorem}
\label{thm:TPBV}
Consider the TPBV problem of the nonlinear ODE \eqref{eqn:ode} with the boundary condition \eqref{eqn:ode_boundary}, and assume $\tilde{p}(x),\tilde{q}(x)\in C^0[0,L]$.
Then, there exists a constant $\varepsilon>0$ such that the problem has a solution with $C^1$ regularity.
\end{theorem}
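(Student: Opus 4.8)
The plan is to exploit the triangular (partially decoupled) structure of the system \eqref{eqn:ode} and to reduce the four-dimensional boundary value problem to a single scalar problem that can be treated by a contraction argument. First I would observe that, under \eqref{eqn:tilde} and \eqref{eqn:hoge}, the right-hand side of the $s$-equation does not depend on $\theta,v,w$; integrating it backward from the terminal condition $s(L)=0$ gives the explicit, known function
\begin{equation*}
s(x) = -\varepsilon\int_x^L \frac{B\tilde p(\xi)-G\tilde q(\xi)}{G^2+B^2}\,\dd\xi,
\end{equation*}
which lies in $C^1[0,L]$ because $\tilde p,\tilde q\in C^0[0,L]$, and which satisfies $\|s\|_\infty=O(\varepsilon)$. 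Similarly, once $v$ is known the $\theta$-equation is solved by a single quadrature using $\theta(0)=0$, so $\theta$ is a mere post-processing step that inherits regularity from $v$ and $s$ and never feeds back. Thus the genuine content of the problem is the coupled $(v,w)$ subsystem, i.e.\ the scalar second-order problem $v''=f_\varepsilon(x,v)$ with the mixed conditions $v(0)=1$ and $v'(L)=w(L)=0$, where $f_\varepsilon(x,v)=s(x)^2/v^3-\varepsilon\,[G\tilde p(x)+B\tilde q(x)]/[v(G^2+B^2)]$.

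Next I would recast this reduced problem as a fixed-point equation. Inverting the two first-order equations with $w(L)=0$ and $v(0)=1$, a continuous $v$ solves the subsystem if and only if it is a fixed point of
\begin{equation*}
\mathcal T_\varepsilon[v](x) := 1 - \int_0^x\!\!\int_\xi^L f_\varepsilon(\eta,v(\eta))\,\dd\eta\,\dd\xi.
\end{equation*}
I would work in the closed ball $\mathcal B:=\{v\in C^0[0,L]:\|v-1\|_\infty\le \tfrac12\}$, on which $v(x)\in[\tfrac12,\tfrac32]$ stays uniformly bounded away from zero; this is exactly what makes the singular terms $1/v^3$ and $1/v$ in $f_\varepsilon$ well-defined, bounded, and Lipschitz in $v$. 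Because $s=O(\varepsilon)$ forces both $f_\varepsilon$ and its $v$-Lipschitz constant to be $O(\varepsilon)$ uniformly on $\mathcal B$, the double integral carries factors that are small with $\varepsilon$.

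The two required estimates then follow by bounding the double integral over $[0,L]$. For the self-map property, $\|\mathcal T_\varepsilon[v]-1\|_\infty\le \varepsilon K L^2$ for a constant $K$ depending only on $G,B,\|\tilde p\|_\infty,\|\tilde q\|_\infty,L$; for the contraction property, $\|\mathcal T_\varepsilon[v_1]-\mathcal T_\varepsilon[v_2]\|_\infty\le \varepsilon K' L^2\|v_1-v_2\|_\infty$, using the uniform Lipschitz bound on $f_\varepsilon$. Choosing $\varepsilon>0$ below $\min\{1/(2KL^2),1/(2K'L^2)\}$ makes $\mathcal T_\varepsilon$ a contraction of $\mathcal B$ into itself, and the Banach fixed-point theorem yields a unique $v\in\mathcal B$. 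I would then recover $w$ from the inner integral and $\theta$ by quadrature, and verify regularity by bootstrapping: $v$ continuous implies the integrand is continuous, whence $w\in C^1$ and $v\in C^2$; then $s/v^2$ is continuous, so $\theta\in C^1$, while $s\in C^1$ from its explicit form. Hence all four components have at least the asserted $C^1$ regularity and satisfy \eqref{eqn:ode}-\eqref{eqn:ode_boundary}.

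The step I expect to be the main obstacle is controlling the singular nonlinearity: the right-hand side of the $w$-equation contains $1/v^3$ and $1/v$, so the whole scheme collapses unless one guarantees a priori that the candidate voltage profile never approaches zero. Confining the iteration to a ball around the unperturbed solution $v\equiv1$ and invoking the smallness of $\varepsilon$ is precisely what secures this lower bound on $v$; pinning down the correct ball radius together with the constants $K,K'$, and checking that a single $\varepsilon$ makes both the self-map and the contraction inequalities hold, is where the real work lies. An alternative I would keep in reserve is an implicit-function-theorem argument at $\varepsilon=0$, where the unperturbed problem $v''=0,\ v(0)=1,\ v'(L)=0$ has the nondegenerate solution $v\equiv1$ with invertible linearisation; this would give the same conclusion, but the contraction route has the advantage of exhibiting the admissible range of $\varepsilon$ explicitly.
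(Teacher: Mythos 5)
Your argument is correct, and it reaches the conclusion by a genuinely different route from the paper. Both proofs begin identically: integrate the decoupled $s$-equation backward from $s(L)=0$ to get an explicit $C^1$ function of size $O(\varepsilon)$, reduce the problem to the $(v,w)$ subsystem with $v(0)=1$, $w(L)=0$, and recover $\theta$ by a final quadrature. Where you diverge is in how the reduced two-point problem is solved. The paper uses a shooting argument: it introduces the initial-value problem with $w(0)=\eta$ and a parameter $\lambda$ standing in for $\varepsilon$, defines $\phi(\eta,\lambda):=w(L,0,(1,\eta),\lambda)$, observes that $\phi(\eta,0)=\eta$ so that $\partial\phi/\partial\eta=1$ at the origin, and invokes the implicit function theorem to obtain $\eta^\ast(\lambda)$ with $\phi(\eta^\ast(\lambda),\lambda)=0$ for small positive $\lambda$ --- essentially the ``alternative kept in reserve'' that you mention at the end. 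You instead convert the mixed boundary conditions into the integral operator $\mathcal T_\varepsilon$ and run a Banach fixed-point argument on the ball $\|v-1\|_\infty\le\tfrac12$. Your version buys three things the paper's does not make explicit: an a priori lower bound $v\ge\tfrac12$ that keeps the iteration away from the singularities $1/v^3$ and $1/v$ (the paper handles this only implicitly through the domain $\mathcal D$ and the local nature of the IFT), local uniqueness of the solution within the ball, and a computable admissible range $\varepsilon<\min\{1/(2KL^2),1/(2K'L^2)\}$ rather than existence for ``some'' unspecified $\varepsilon$. The paper's shooting route is shorter because it outsources all the analysis to standard theorems on smooth dependence of ODE solutions on initial data and parameters, but it yields only the weaker qualitative statement actually asserted in the theorem. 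Your estimates (the $O(\varepsilon)$ bounds on $f_\varepsilon$ and its $v$-Lipschitz constant on the ball, and the $L^2$ factor from the double integral) are the right ones and close the gap you correctly identify as the main obstacle.
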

\begin{proof}
See \ref{sec:AppA}.
\end{proof}
\begin{remark}
The $C^1$ regularity of solutions validates the application of finite-difference scheme \cite{Keller} for locating them numerically.
We use the so-called central finite-difference scheme \cite{Keller} in this paper.
\end{remark}

Next, we consider the derived  IV problems of the linear ODEs \eqref{eqn:per1} to \eqref{eqn:per4}.
The existence and uniqueness of solutions of the problems are a simple outcome of applying the standard theory of ODEs \cite{Hale}.
For this, let us denote the $n$-th order problems described by \eqref{eqn:per1} to \eqref{eqn:per4} as follows: for $\bm{u}_{n}=(\theta_n,v_n,s_n,w_n)^{\top}$ $(n=1,\ldots,4)$,
\begin{equation}
  \frac{\dd\bm{u}_{n}}{\dd x} = \bm{F}_{n}(x,\bm{u}_{1},\bm{u}_{2},\bm{u}_{3},\bm{u}_{4}),
  \label{eqn:nthode}
\end{equation}
with the boundary conditions \eqref{eqn:ode0}.
The $\bm{F}_n$ describe the right-hand sides of \eqref{eqn:per1} to \eqref{eqn:per4}.
The following theorem with $C^1$ regularity now holds.
\begin{theorem}
\label{thm:LINEAR}
Consider the IV problem of the $n$-th linear ODE for $n=1,\ldots,4$, and assume $\tilde{p}(x),\tilde{q}(x)\in C^0[0,L]$.
Then, the problem has a unique solution $\bm{u}_n(x)\in C^1([0,L]^4)$.
\end{theorem}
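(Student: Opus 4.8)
The plan is to exploit the triangular structure of the linear systems \eqref{eqn:per1}--\eqref{eqn:per4} and to argue by induction on the order $n$, reducing each problem to a finite sequence of scalar quadratures. Inspecting the right-hand sides $\bm{F}_n$ in \eqref{eqn:nthode}, I note two features. Across orders, $\bm{F}_n$ is a polynomial expression in $x$, in $\tilde{p}(x),\tilde{q}(x)$, and in the components of the \emph{strictly} lower-order solutions $\bm{u}_1,\ldots,\bm{u}_{n-1}$ only. Within a fixed order, the coupling to $\bm{u}_n$ itself is strictly triangular: the equations for $s_n$ and $w_n$ have right-hand sides free of $\theta_n,v_n,s_n,w_n$, while $\dd\theta_n/\dd x$ involves at most $s_n$ (and only for $n=1$) and $\dd v_n/\dd x$ equals $w_n$. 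This triangularity is exactly what turns the split boundary data \eqref{eqn:ode0} into a solvable sequence of genuine initial-value problems.

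For the base case $n=1$, the right-hand sides of the $s_1$- and $w_1$-equations in \eqref{eqn:per1} are $(B\tilde{p}-G\tilde{q})/(G^2+B^2)$ and $-(G\tilde{p}+B\tilde{q})/(G^2+B^2)$, which are continuous on $[0,L]$ since $\tilde{p},\tilde{q}\in C^0[0,L]$ and $G,B$ are constants. Integrating from the prescribed endpoint $x=L$ with $s_1(L)=w_1(L)=0$ yields unique $s_1,w_1$; as their derivatives are continuous, $s_1,w_1\in C^1[0,L]$. Substituting into $\dd\theta_1/\dd x=-s_1$ and $\dd v_1/\dd x=w_1$ and integrating from $x=0$ with $\theta_1(0)=v_1(0)=0$ gives unique $\theta_1,v_1\in C^1[0,L]$. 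Hence $\bm{u}_1\in C^1([0,L]^4)$ exists and is unique.

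For the inductive step, assume $\bm{u}_1,\ldots,\bm{u}_{n-1}\in C^1([0,L]^4)$. Then every entry of $\bm{F}_n$ is a finite polynomial combination of continuous functions---the lower-order components together with $\tilde{p},\tilde{q}$---and is therefore continuous on $[0,L]$. Following the triangular order, I first integrate the $s_n$- and $w_n$-equations (whose right-hand sides depend only on $x$ and on $\bm{u}_1,\ldots,\bm{u}_{n-1}$) backward from $x=L$ with $s_n(L)=w_n(L)=0$, obtaining unique $C^1$ functions by the fundamental theorem of calculus. I then integrate the $\theta_n$- and $v_n$-equations forward from $x=0$ with $\theta_n(0)=v_n(0)=0$; their right-hand sides are now known continuous functions, so the solutions exist, are unique, and have continuous derivatives, i.e. $\theta_n,v_n\in C^1[0,L]$. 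This yields $\bm{u}_n\in C^1([0,L]^4)$ and closes the induction for $n=1,\ldots,4$.

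The point that deserves the most care---and the only place one could be misled---is the boundary data: although \eqref{eqn:nthode} is labelled an initial-value problem, the conditions \eqref{eqn:ode0} are split between $x=0$ and $x=L$, so the conclusion does \emph{not} follow from a naive appeal to the Picard--Lindel\"of theorem for a coupled system. The resolution is precisely the triangular, nilpotent structure of the same-order coupling: because the $s_n$- and $w_n$-equations are free of their own order and carry their data at $x=L$, while $\theta_n,v_n$ carry their data at $x=0$, each scalar equation is integrated from the endpoint at which its condition is prescribed. Equivalently, the associated homogeneous problem admits only the trivial solution under \eqref{eqn:ode0}, so the standard linear theory of \cite{Hale} applies. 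Verifying that no same-order feedback disturbs this ordering is the crux; once it is checked, existence, uniqueness, and $C^1$ regularity are immediate.
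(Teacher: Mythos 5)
Your proof is correct and follows essentially the same route as the paper's: direct integration of the $s_n$- and $w_n$-equations from $x=L$, then of the $\theta_n$- and $v_n$-equations from $x=0$, recursively over the orders $n=1,\ldots,4$. Your explicit discussion of why the split boundary data \eqref{eqn:ode0} is harmless (the triangular same-order coupling) is a useful elaboration of what the paper leaves implicit, but it is not a different argument.
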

\begin{proof}
First, consider the 1st order problem.
Since $\tilde{p}(x),\tilde{q}(x)\in C^0[0,L]$ is assumed, by direct integration of the right-hand sides of \eqref{eqn:per1} for $\dd s_1/\dd x$ and $\dd w_1/\dd x$, the solutions $s_1(x)$ and $w_1(x)$ are unique and in $C^1[0,L]$.
Then, from the ODEs for $\theta_1$ and $v_1$ in \eqref{eqn:per1}, the solutions $\theta_1(x)$ and $v_1(x)$ are in $C^1[0,L]$.
In the same manner, the higher order problems are also solvable recursively and have unique solutions in $C^1[0,L]$.
\end{proof}

In addition to this, regarding the voltage amplitude $v_n$ (and voltage gradient $w_n$), it is possible to state a stronger result because the following equations of the derivatives $w_n$ and $w_n$ are derived for \emph{all} $n=3,4,\ldots$:
\begin{equation}
\frac{\dd v_n}{\dd x} = w_n, \qquad
\frac{\dd w_n}{\dd x} = -3s_1(x)^2v_{n-2}(x)+\frac{G\tilde{p}(x)+B\tilde{q}(x)}{G^2+B^2}v_{n-1}(x).
\label{eqn:v34}
\end{equation}
Thus, we have the following theorem that is fundamental for the asymptotic charcterisation of distribution voltage profile:
\begin{theorem}
\label{thm:v-w}
Consider the IV problem of the $n$-th linear ODE \eqref{eqn:v34} with $v_n(0)=0$ and $w_n(L)=0$ for $n=3,4,\ldots$, and assume $\tilde{p}(x), \tilde{q}(x)\in C^0[0,L]$.
Then, the problem has a unique solution $(v_n(x),w_n(x))^\top\in C^1([0,L]^2)$.
\end{theorem}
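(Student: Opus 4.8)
The plan is to exploit the triangular structure of \eqref{eqn:v34}: the right-hand side of the $w_n$-equation involves only the lower-order profiles $v_{n-1}$, $v_{n-2}$ and the first-order function $s_1$, and never the unknown pair $(v_n,w_n)$ itself. Hence, once the lower orders are fixed, the forcing
\[
f_n(x):=-3s_1(x)^2v_{n-2}(x)+\frac{G\tilde{p}(x)+B\tilde{q}(x)}{G^2+B^2}\,v_{n-1}(x)
\]
is a fully determined function of $x$, and the problem reduces to the scalar second-order equation $v_n''=f_n$ subject to the mixed conditions $v_n(0)=0$ and $v_n'(L)=w_n(L)=0$. I would run the argument by induction on $n$, with $n=3,4$ serving as base cases already covered by Theorem~\ref{thm:LINEAR}.

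The first step is to record the regularity of the forcing. By the inductive hypothesis $v_{n-1},v_{n-2}\in C^1[0,L]$ (true for $n=3,4$ by Theorem~\ref{thm:LINEAR}), and since Theorem~\ref{thm:LINEAR} also gives $s_1\in C^1[0,L]$ so that $s_1^2\in C^0[0,L]$, the hypothesis $\tilde{p},\tilde{q}\in C^0[0,L]$ yields $f_n\in C^0[0,L]$. This continuity is the only place where the assumptions on the data enter.

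Next I would construct the solution explicitly by two successive integrations, each consuming one endpoint condition. Integrating $w_n'=f_n$ from $x$ to $L$ and using $w_n(L)=0$ gives $w_n(x)=-\int_x^L f_n(\xi)\,\dd\xi$; integrating $v_n'=w_n$ from $0$ to $x$ with $v_n(0)=0$ gives $v_n(x)=\int_0^x w_n(\eta)\,\dd\eta$. This closed form proves existence, and the regularity is immediate: $f_n\in C^0[0,L]$ makes $w_n\in C^1[0,L]$, and then $v_n'=w_n\in C^1[0,L]$ makes $v_n\in C^2[0,L]\subset C^1[0,L]$, so that $(v_n,w_n)^\top\in C^1([0,L]^2)$ and the induction closes. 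For uniqueness, the difference of two solutions solves the homogeneous problem $v''=0$, $v(0)=0$, $v'(L)=0$; here $v''=0$ gives $v'\equiv\mathrm{const}$, the condition $v'(L)=0$ forces this constant to be zero, and the condition $v(0)=0$ then forces $v\equiv0$, whence $w=v'\equiv0$.

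I do not expect a genuine obstacle: because the $(v_n,w_n)$ subsystem is driven purely by lower-order data, no fixed-point or nonlinear argument is needed, and the two-point nature of the boundary conditions is handled transparently by the direction of integration—toward $L$ for the condition at the end point and away from $0$ for the condition at the starting point. The only point demanding care is the bookkeeping of the induction, ensuring that the continuity of $f_n$, and hence the whole construction, propagates to every order $n\ge 3$.
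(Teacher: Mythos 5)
Your proof is correct and follows essentially the same route as the paper's: since the right-hand side of \eqref{eqn:v34} depends only on lower-order terms already known to be $C^1$, one obtains $(v_n,w_n)$ by direct integration of a continuous forcing, first for $w_n$ using $w_n(L)=0$ and then for $v_n$ using $v_n(0)=0$. You merely make explicit the induction bookkeeping and the uniqueness argument that the paper leaves implicit in its appeal to Theorem~\ref{thm:LINEAR}.
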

\begin{proof}
The proof is almost the same as Theorem~\ref{thm:LINEAR}.
Since $p(x),q(x)\in C^0[0,L]$ and $s_1(x),v_1(x),v_2(x)\in C^1[0,L]$ hold, by integration of the right-hand sides of \eqref{eqn:v34}, we immediately see $w_n(x)\in C^1[0,L]$ and also $v_n(x)\in C^1[0,L]$ at least.
\end{proof}
\begin{remark}
For the rest of the independent variables, namely $\theta_n(x)$ and $s_n(x)$, we do not have the result parallel to Theorem~\ref{thm:v-w} unfortunately because no equation parallel to \eqref{eqn:v34} is derived for $\theta_n(x)$ and $s_n(x)$.
\end{remark}

%
%

\subsection{Remarks}

\subsubsection{Physical Meaning of the Regular Perturbation}

In Section~\ref{subsec:RegularPerturb}, we introduced the small positive parameter $\varepsilon$ and associated asymptotic expansion.
Here, let us introduce a physical interpretation of $\varepsilon$ in order to make it clear to show its utility.
The parameter controls the magnitude of the impacts of loads and EVs on the distribution feeder.
Technically and interestingly, it can be interpreted as the ratio of demand's utilization (in kW) with respect to the capacity (also in kW) of a distribution transformer.
This ratio is related to the congestion management of distribution grids \cite{Arias:2019} in which overloading distribution transformers is taken into consideration as in \cite{NordicOutlook}.
Now, we decompose $\varepsilon$ into the two parameters: the contribution by EV charging, denoted by $\varepsilon_{\rm ev}$, and the ontribution by the other loads and DERs by $\varepsilon_{\rm load}$:
\begin{equation}
  \varepsilon=\varepsilon_{\rm ev}+\varepsilon_{\rm load},
  \label{eqn:ep_EV}
\end{equation}
where we will see that the two parameters are bounded above by 1.
Accordingly, the power density functions $p(x)$ and $q(x)$ are rewritten as follows:
\begin{equation}
  p(x)=\varepsilon_{\rm ev}\tilde{p}(x)+\varepsilon_{\rm load}\tilde{p}(x),\qquad q(x)=\varepsilon_{\rm ev}\tilde{q}(x)+\varepsilon_{\rm load}\tilde{p}(x).
\end{equation}
Then, from \eqref{eqn:per}, the voltage amplitude $v(x)$ is expanded in terms of $\varepsilon_{\rm ev}$ and $\varepsilon_{\rm load}$ as
\begin{align}
  v(x) &= 1+\varepsilon_{\rm load} v_1(x)+{\varepsilon^2_{\rm load}}v_2(x)+{\varepsilon^3_{\rm load}}v_3(x)+\cdots
  \nonumber\\
   & \quad+\varepsilon_{\rm ev} v_1(x)+{\varepsilon^2_{\rm ev}}v_2(x)+{\varepsilon^3_{\rm ev}} v_3(x)+\cdots
   \nonumber\\
   & \quad +2\varepsilon_{\rm ev}\varepsilon_{\rm load}v_2(x)+3\varepsilon_{\rm ev}\varepsilon_{\rm load}(\varepsilon_{\rm ev}+\varepsilon_{\rm load})v_3(x)+\cdots.
  \label{eqn:v_ep}
\end{align}
To explain each of the lines on the right-hand side, let us consider a loading condition of the distribution feeder where the loads except for EVs are originally connected and known a prior.
If no EV is connected, that is $\varepsilon_{\rm ev}=0$, then the voltage amplitude is completely evaluated with the first line.
Thus, the impact of EV charging \emph{conditioned} by the loads is quantified with the second and third lines, defined as
\begin{align}
  \Delta v_{\rm ev\,|\,load}(x) &:=
  \varepsilon_{\rm ev} v_1(x)+{\varepsilon^2_{\rm ev}}v_2(x)+{\varepsilon^3_{\rm ev}}v_3(x)
  +\cdots \nonumber
  \\
  &\quad +2\varepsilon_{\rm ev}\varepsilon_{\rm load}v_2(x)
  +3\varepsilon_{\rm ev}\varepsilon_{\rm load}(\varepsilon_{\rm ev}+\varepsilon_{\rm load})v_3(x)
  +\cdots.
  \label{eqn:v_ev}
\end{align}
Here, it should be noted that the perturbation terms $v_n(x)$ are determined solely by the \emph{scaled} power density functions $\tilde{p}(x)$ and $\tilde{q}(x)$ not the ratio $\varepsilon_{\rm ev}$, and that $\tilde{p}(x)$ and $\tilde{q}(x)$ represent the spatial shape of demand of power that is determined mainly by locations of loading centers and charging points.
This implies that \eqref{eqn:v_ev} provides a simple method for quantifying how the EV penetration affects the voltage amplitude, which will be demonstrated in Figure~\ref{fig:v_EV} and Table~\ref{tab:v_ev}.
It becomes realized for the first time using the ODE approach and thus shows its technological benefit in comparison with the conventional method based on power-flow equations, which requires inevitable iterative computations of nonlinear programming.

\begin{figure}[!t]
  \centering
  \includegraphics[width=0.4\hsize]{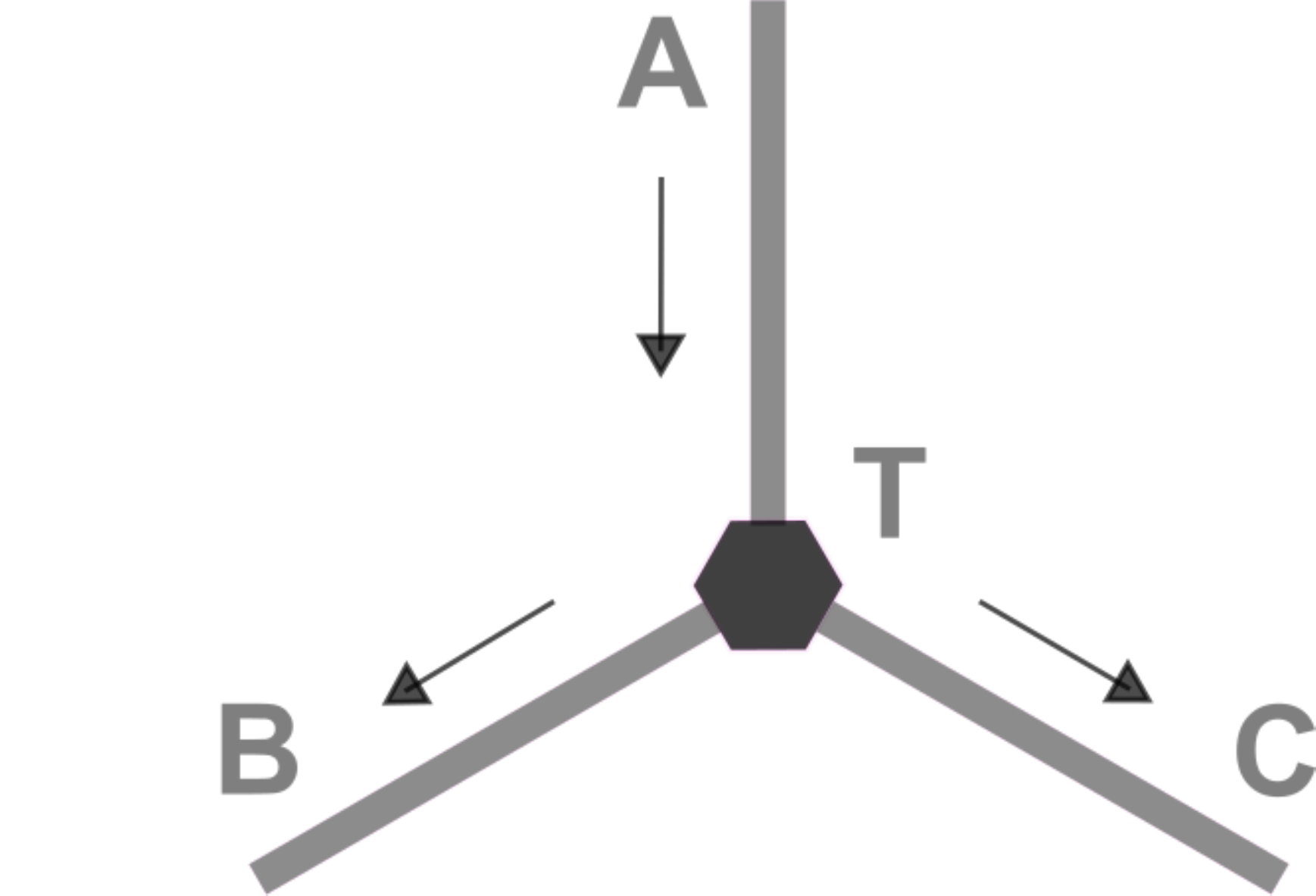}
  \caption{Basic configuration of three feeders with one bifurcation point.
  Three branch lines  A, B, and C are connected to the bifurcation point T. The arrows respect the reference directions of current flows.}
  \label{fig:branch}
\end{figure}

\subsubsection{Representation of Feeder's Bifurcation}

In assessment of realistic configurations, it is inevitable to consider the bifurcation of distribution feeder in our framework of asymptotic expansion.
To take it into account, we consider the three feeders with one bifurcation point shown in Figure~\ref{fig:branch}.
For this case, in order to formulate the nonlinear TPBV problem, we introduce as in \cite{Susuki1,Susuki2} an additional boundary condition at the bifurcation point.
It is shown in \cite{Susuki1,Susuki2} that at this point, the voltage phase $\theta$ and voltage amplitude $v$ are continuous along any pair of the three feeders, labeled as A, B, and C, while the auxiliary variable $s$ and voltage gradient $w$ are not continuous.
These are described as follows:
\begin{equation}
   \left.
   \begin{array}{l}
      \theta_{\rm A} = \theta_{\rm B} = \theta_{\rm C}\\
      v_{\rm A} = v_{\rm B} = v_{\rm C}\\
      s_{\rm A} = s_{\rm B} + s_{\rm C}\\
      w_{\rm A} = w_{\rm B} + w_{\rm C}
   \end{array}
   \right\},
\label{eqn:boundary}
\end{equation}
where $\theta\sub{A}$, $v\sub{B}$, and so on represent the values of the dependent variables taken as limits to the bifurcation point along feeders A, B, and C, respectively\footnote{
For example, if feeder A corresponds to that in Figure~\ref{fig:simplefeeder} and is connected to be the bifurcation point at $x=L$, then $\theta_{\rm A}$ is defined as $\displaystyle\lim_{x\to L} \theta(x)$.}.
It should be noted that the boundary conditions for $s$ and $w$ are dependent on the choice of reference directions of current flows: \eqref{eqn:boundary} holds for the directions in Figure~\ref{fig:branch}.

Now, we show a set of boundary conditions associated with \eqref{eqn:boundary} in the asymptotic expansion.
For this, using the three different independent variables $x_{j}\in\mathbb{R}$ (${j}\in\{\rm A, B, C\}$) for the three feeders, we suppose the following expansions of the functions $\theta_{j}(x_{j})$, $v_{j}(x_{j})$, $s_{j}(x_{j})$, and $w_{j}(x_{j})$ around a neighborhood of the bifurcation point (represented in the $x_j$-coordinates as $x_{j}=0$) in terms of a common small parameter $\varepsilon$:
\begin{equation}
\left.
\begin{aligned}
\theta_{j}(x_{j})\sim \sum^\infty_{i=1}\varepsilon^i \theta_{j,i}(x_{j}), \quad&
v_{j}(x_{j})\sim 1+\sum^\infty_{i=1}\varepsilon^i v_{j,i}(x_{j}) \quad\\
s_{j}(x_{j})\sim \sum^\infty_{i=1}\varepsilon^i s_{j,i}(x_{j}), \quad &
w_{j}(x_{j})\sim \sum^\infty_{i=1}\varepsilon^i w_{j,i}(x_{j}) \quad
\end{aligned}
\right\}.
\end{equation}
Thereby, the condition \eqref{eqn:boundary} at the bifurcation point is re-written in the asymptotic framework as follows:
\begin{equation}
   \left.
   \begin{array}{ccccl}
      \displaystyle \theta_{\rm A1} = \theta_{\rm B1} = \theta_{\rm C1}, && \theta_{\rm A2} = \theta_{\rm B2} = \theta_{\rm C2}, &\quad& \cdots \qquad
      \vspace{1mm}
      \\\displaystyle v_{\rm A1} = v_{\rm B1} = v_{\rm C1}, &\quad& v_{\rm A2} = v_{\rm B2} = v_{\rm C2}, &\quad& \cdots \qquad
      \vspace{1mm}
      \\\displaystyle s_{\rm A1} = s_{\rm B1} + s_{\rm C1}, &\quad& s_{\rm A2} = s_{\rm B2} + s_{\rm C2}, &\quad& \cdots \qquad
      \vspace{1mm}
      \\\displaystyle w_{\rm A1} = w_{\rm B1} + w_{\rm C1}, &\quad& w_{\rm A2} = w_{\rm B2} + w_{\rm C2}, &\quad& \cdots \qquad
   \end{array}
   \right\}
   \label{eqn:perbound}
\end{equation}
where $\theta_{{j},i}$ is defined as $\displaystyle \lim_{x_{j}\to 0}\theta_{{j},i}(x)$ and so on.
The derivation of \eqref{eqn:perbound} is presented in \ref{sec:AppB}.
This shows that it is possible to perform a low-order approximation of the distribution voltage profile in a self-consistent manner over the bifurcation point.
This will be used in Section~\ref{sec:numerics}\ref{subsec:theory} for numerical simulations.

Furthermore, it is practically inevitable to consider the case where a voltage regulation device (Step Voltage Regulator \cite{Kersting})
is installed in the feeder.
In this case, an additional boundary condition can be formulated in the similar manner as above and \cite{Susuki2}.
This is summarized in \ref{sec:AppC} for wider utility of the asymptotic assessment.

\section{Numerical Demonstration}
\label{sec:numerics}

This section is devoted to numerical demonstration of the asymptotic charcterisation of distribution voltage profile 
in Section~\ref{sec:perturb}.
The demonstration is done with numerical solutions of the nonlinear TPBV problem for two distribution models.
The main idea for the demonstration is to compare the asymptotic expansions \eqref{eqn:per} up to the 1st, 2nd, 3rd, and 4th perturbation terms with direct numerical solutions for the simple feeder model (see Figure~\ref{fig:simple}) and the practical model (see Figure~ \ref{fig:kanden}).
The detailed setting of the numerical demonstrations in this section is presented in \ref{sec:AppD}.

Here, in practical situations, the power demand and generation along a feeder happen in a discrete manner.
To precisely state this, we suppose that $N$ number of load and stations are located at $x=\xi_i\in(0,L)$ ($i=1,\ldots,N$) satisfying $\xi_{i+1}<\xi_i$.
Then, by denoting as $P_i$ the active power consumed at $x=\xi_i$, the power density functions $p(x)$ is given as follows:
\begin{equation}
   p(x)
   = \sum^{N}_{i=1} P_i \delta(x-\xi_i),
\label{eqn:pq}
\end{equation}
where $\delta(x-\xi_i)$ is the Dirac's delta-function supported at $x=\xi_i$.
The formulation of $p(x)$ is excluded in the theoretical development of Section~\ref{sec:perturb}\ref{subsec:theory} because of the $C^0$ assumption and, furthermore, makes it difficult to numerically approximate solutions of the nonlinear ODE.
To avoid these, as in \cite{Susuki2,Mizuta2}, we use the following coarse-graining of $p(x)$ with the Gaussian function:
\begin{equation}
   p(x)\sim\sum_{i=1}^N \frac{P_i}{\sqrt{2\pi\sigma^2}}\mathrm{exp}\biggl(-\frac{(x-\xi_i)^2}{2\sigma^2}\biggr),
   \label{eqn:Gauss}
\end{equation}
where we regard $x,\xi_i$ as scalars, and $\sigma^2$ is the variance.
The coarse-grained $p(x)$ is clearly $C^1$, can be treated as in Section~\ref{sec:perturb}\ref{subsec:theory}, and hence lead to the existence of solutions for the nonlinear TPBV problem.
The parameter $\sigma(>0)$ is fixed at a constant distance, which is sufficiently smaller than the interval between loads or EV charging stations.

\subsection{Simple Feeder}
\begin{figure}[t]
   \centering
   \includegraphics[width=0.7\hsize]{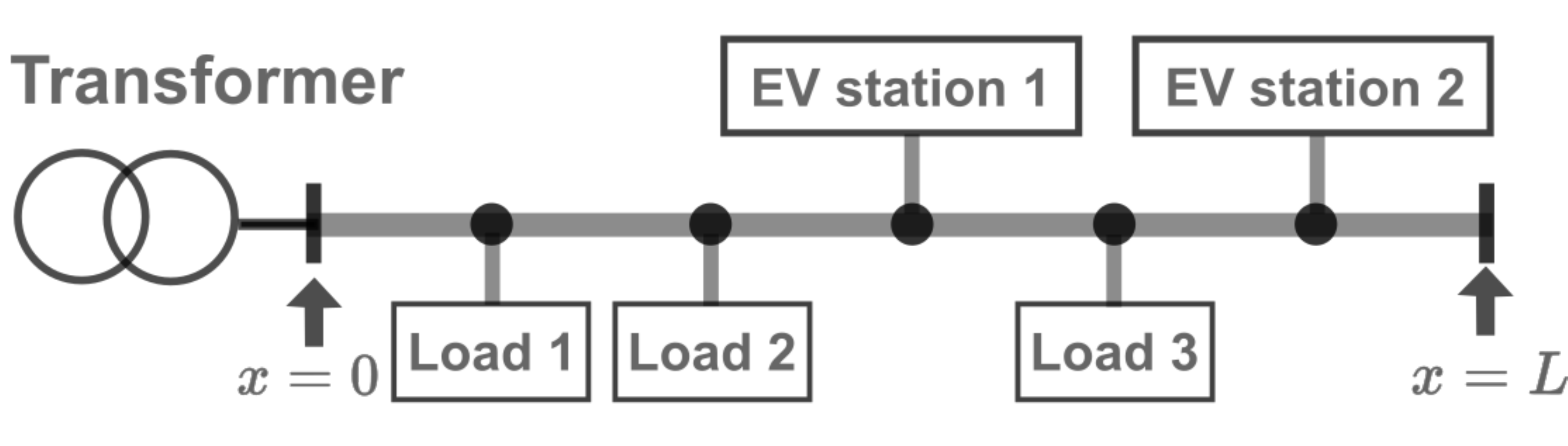}
   \caption{Simple distribution feeder model.
   The feeder has the 3 loads and 2 EV stations located at a common interval (0.5\,km).}
   \label{fig:simple}
\end{figure}

First, we evaluate the asymptotic expansion
\eqref{eqn:per} for the simple feeder model in Figure~\ref{fig:simple}.
Specifically, the three functions $\theta(x)$, $v(x)$, and $w(x)$ are addressed; $s(x)$ is not addressed because it is equivalent to the 1st-order $\varepsilon s_1(x)$, and the higher-order terms $s_i(x)$ are identically zero.
The feeder model in Figure~\ref{fig:simple} possesses the 3 loads and 2 EV charging stations located at a common interval (0.5\,km).
We assume that the rated capacity of the transformer is set at 12\,MVA, and no loads exists at the end point of the feeder for simplicity of the analysis.
It is also assumed that all the loads are connected via inverters like EVs and thus operated under unity power-factor mode.
This implies that the reactive power $q(x)$ is here negligible, namely $q(x)=0$ for all $x\in[0{\rm km},5{\rm km}]$.
All the loads connected to the feeder are constant in time.
The power consumption for each of EV station 1 and 2 is 0.200\,pu,
the power consumption for each of Load 1, 2, and 3 is 0.133\,pu, and therefore the total amount of the power consumption is set as 80\,\% of the transformer's capacity.
The coarse-grained $p(x)$ for simulations of the nonlinear and linear ODEs is shown in the top of Figure~\ref{fig:ode_sim}.
\begin{figure}[h]
   \centering
      \includegraphics[width=0.6\textwidth]{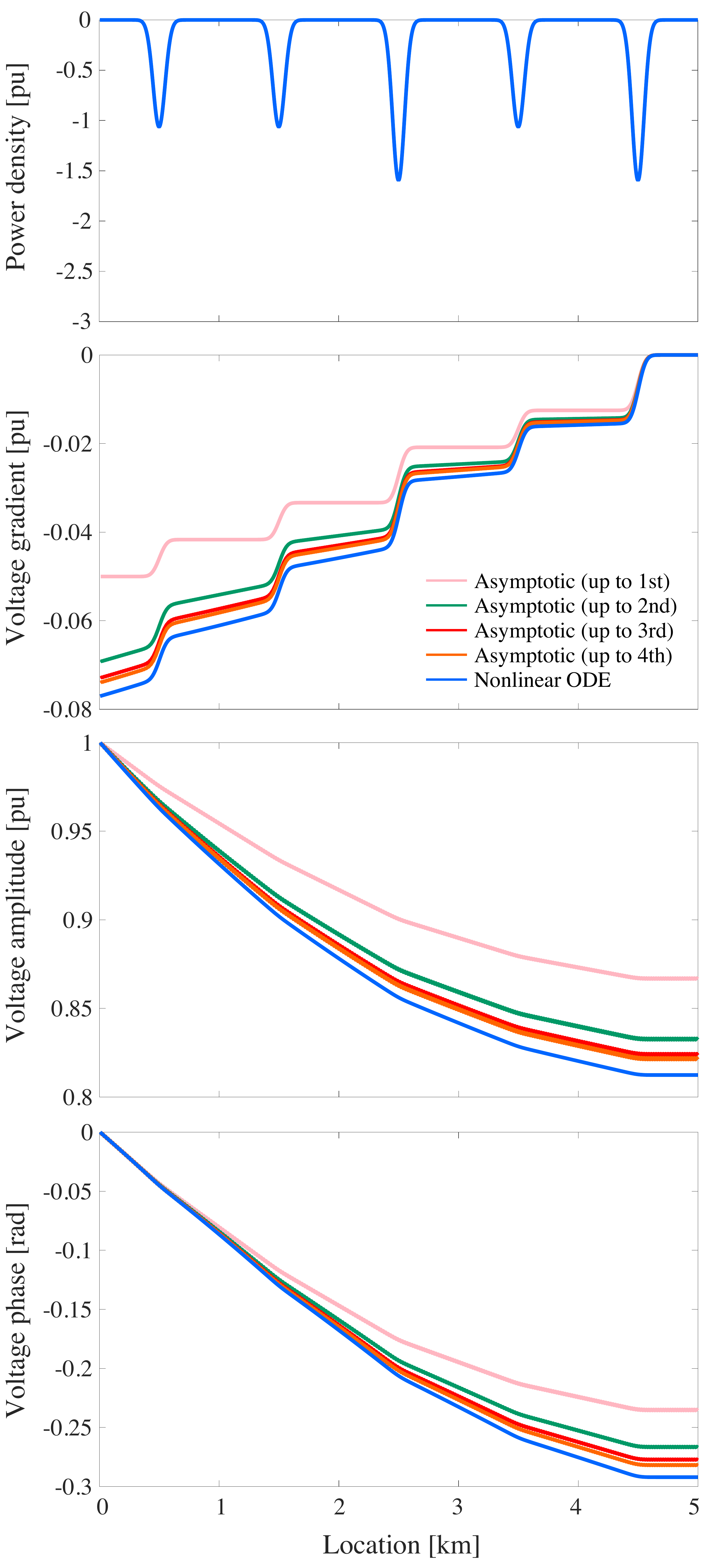}
      \caption{%
      Validation of proposed asymptotic expansions
      \eqref{eqn:per} for simple distribution feeder model in~Figure \ref{fig:simple}.
      The power density function $p(x)$ and associated numerical solutions of the nonlinear ODE model \eqref{eqn:ode} are also shown.
      }
      \label{fig:ode_sim}
\end{figure}
\begin{table}[t]
   \centering
   \caption{Validation of asymptotic expansion \eqref{eqn:per} by differences from nonlinear ODE for simple distribution feeder model in Figure~\ref{fig:simple}}
   \label{tab:sim}
   \vskip 2mm
   \begin{tabular}{lccc}\hline
        & $\Delta w(0\,{\rm km})$
        & $\Delta v(L=5\,{\rm km})$ & $\Delta \theta(L=5\,{\rm km})$\\\hline%
      asymptotic expansion up to 1st-order
      & 0.02700 & 0.0542 & 0.0564 \\\hline
      asymptotic expansion up to 2nd-order
      & 0.00951 & 0.0191 & 0.0250 \\\hline
      asymptotic expansion up to 3rd-order
      & 0.00580 & 0.0102 & 0.0145 \\\hline
      asymptotic expansion up to 4th-order
      & 0.00478 & 0.0078 & 0.0099 \\\hline
   \end{tabular}
\end{table}

Figure~\ref{fig:ode_sim} and Table~\ref{tab:sim} show the proposed asymptotic expansions and direct numerical solutions of the nonlinear ODE with the power density function in the top of Figure~\ref{fig:ode_sim}.
The choice of the value of $\varepsilon$ is an issue for numerical simulations of the asymptotic expansion because it is not guided by perturbation theory.
Here, for a fixed $p(x)$ we set $\varepsilon$ at multiple small values for simulations\footnote{This implies in \eqref{eqn:Gauss} with $P_{i}=\varepsilon\tilde{P}_{i}$ that we tune both values of $\varepsilon$ and $\tilde{P}_{i}$ while keeping $P_i$.}, and the following description is consistent for $\varepsilon=1\times10^{-1},1\times 10^{-2},\ldots,1\times10^{-8}$.
The second row of Figure~\ref{fig:ode_sim} shows the computational results on voltage gradient $w(x)$:
the proposed asymptotic expansions
up to 1st- to 4th-order terms by \emph{pink, green, red, and orange solid} lines, and the direct numerical solution of the nonlinear ODE by \emph{blue dashed} line.
The difference between each of the asymptotic expansions and the nonlinear ODE increases from the end to the start of the feeder due to the effect of power consumption by loads.
Similarly, the associated voltage amplitude $v(x)$ and voltage phase $\theta(x)$ are shown in the third row and bottom of Figure~\ref{fig:ode_sim}.
For both, the difference between each of the asymptotic expansions and the nonlinear ODE increases from the start to the end of the feeder.
It is clearly shown in Figure~\ref{fig:ode_sim} that the asymptotic expansions for $w(x)$, $v(x)$, and $\theta(x)$ approach to the direct numerical solution of the nonlinear ODE as the order increases.
This is confirmed from the quantification of  differences between the asymptotic expansion and the nonlinear ODE for the computed values, denoted as $\Delta w(0{\rm km})$, $\Delta v(L=5{\rm km})$, and $\Delta \theta(L=5{\rm km})$, in Table~\ref{tab:sim}.

\begin{figure}
  \centering
  \includegraphics[width=0.65\hsize]{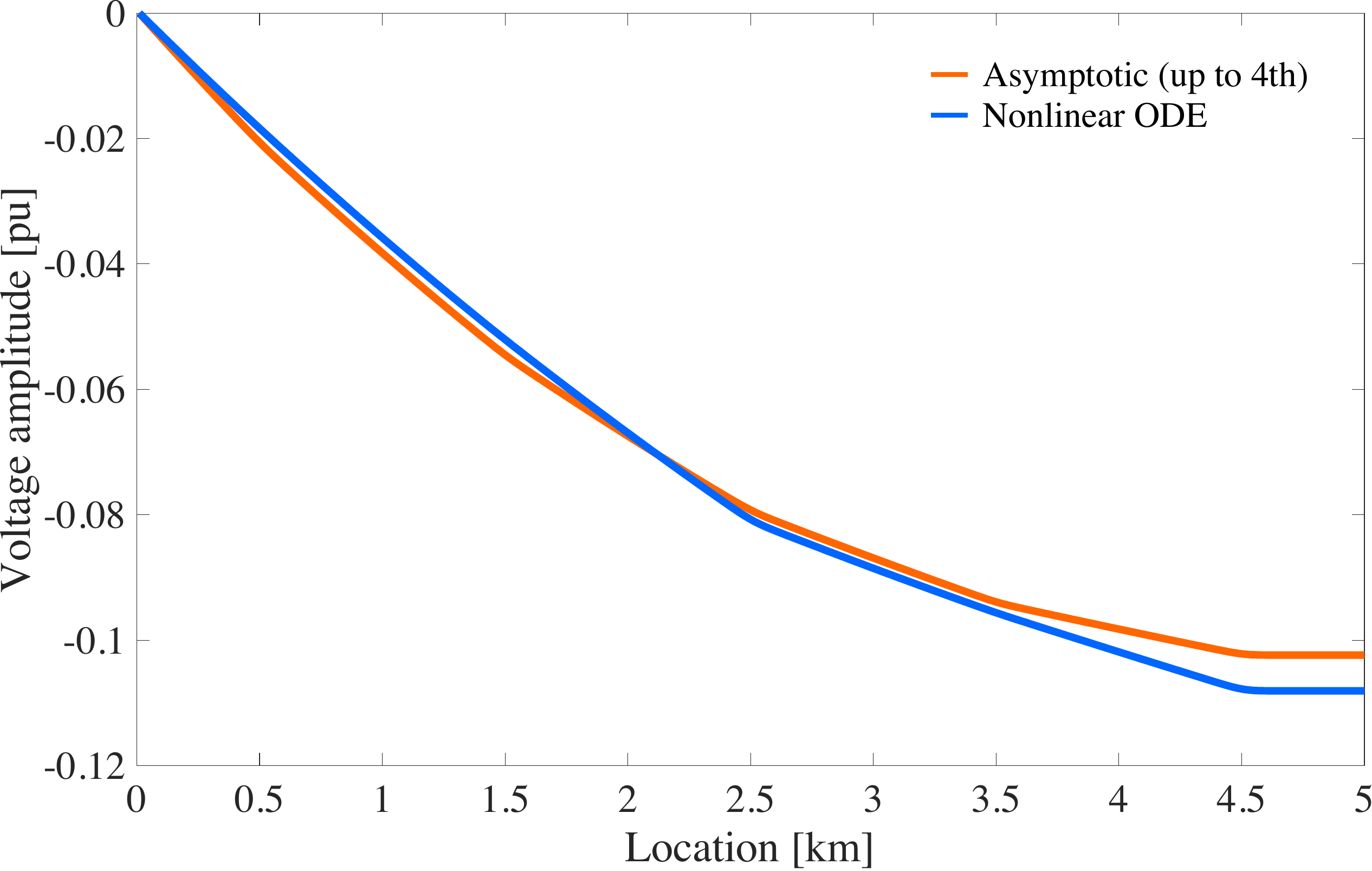}
  \caption{Impact assessment of the EV charging using \eqref{eqn:per}. The \emph{orange} line shows the assessment result of \eqref{eqn:per} under the asymptotic expansion up to 4th-order term and $\varepsilon_{\rm ev}=0.4\varepsilon$. The \emph{blue} line shows the associated numerical result of the nonlinear ODE.}
  \label{fig:v_EV}
\end{figure}
\begin{table}[t]
   \centering
   \caption{Parameter dependence of accuracy of the impact assessment using \eqref{eqn:per}. The error $\Delta v(L=5\,{\rm km})$ of results between \eqref{eqn:per} and the nonlinear ODE is computed at $x=L=5\,{\rm km}$. The asymptotic expansion up to 4th is used here.
   }
   \label{tab:v_ev}
   \vskip 2mm

   \begin{tabular}{cccc}\hline
      $\varepsilon_{\rm ev}/\varepsilon$ &
      Error $\Delta v(L=5{\rm km})$ \\\hline
      $30\%$
      & 0.004285  \\\hline
      $40\%$
      & 0.005704 \\\hline
      $50\%$
      & 0.01795  \\\hline
      $60\%$
      & 0.03283  \\\hline
   \end{tabular}
 \end{table}
Here, the utility of the impact assessment \eqref{eqn:per} is demonstrated.
Figure~\ref{fig:v_EV} and Table~\ref{tab:v_ev} show numerical results on the impact assessment for the single feeder model.
Specifically, in Figure~\ref{fig:v_EV} we consider the original (a prior) case where the loads (Load 1, 2, and 3) are originally connected; then we introduce the EV charging (EV stations 1 and 2). For this, by the \emph{orange } line we plot \eqref{eqn:per} under the asymptotic expansion (up to 4th-order term) and $\varepsilon_{\rm ev}=0.4\varepsilon$ (hence $\varepsilon_{\rm load}=0.6\varepsilon$), and by the \emph{blue} line we also plot the numerical simulation of the nonlinear ODE.
It is clearly shown that these numerical results are similar. Table~\ref{tab:v_ev} shows the $\varepsilon_{\rm ev}$-dependence of accuracy of the impact assessment using \eqref{eqn:per}.
The errors of results between \eqref{eqn:per} and the nonlinear ODE are small for different choices of $\varepsilon_{\rm ev}$.
The accuracy slightly decreases as $\varepsilon_{\rm ev}$ increases.
The EV station 2 is placed close to the end of feeder and thus dominantly affects the voltage profile (if it extracts current flow from the bank through the feeder). In this, the increase of $\varepsilon_{\rm ev}$ implies that the accuracy of asymptotic expansion up to 4th tends to deteriorate. This can be improved by adding higher-order terms to the assessment.
Thus, the effectiveness of the asymptotic expansion for the impact assessment is confirmed.

Consequently, the proposed asymptotic expansion is capable of evaluating the distribution voltage profile for the simple distribution feeder model in Figure~\ref{fig:simple}.

\subsection{Practical Feeder}
\begin{figure}[t]
   \centering
   \includegraphics[width=0.7\hsize]{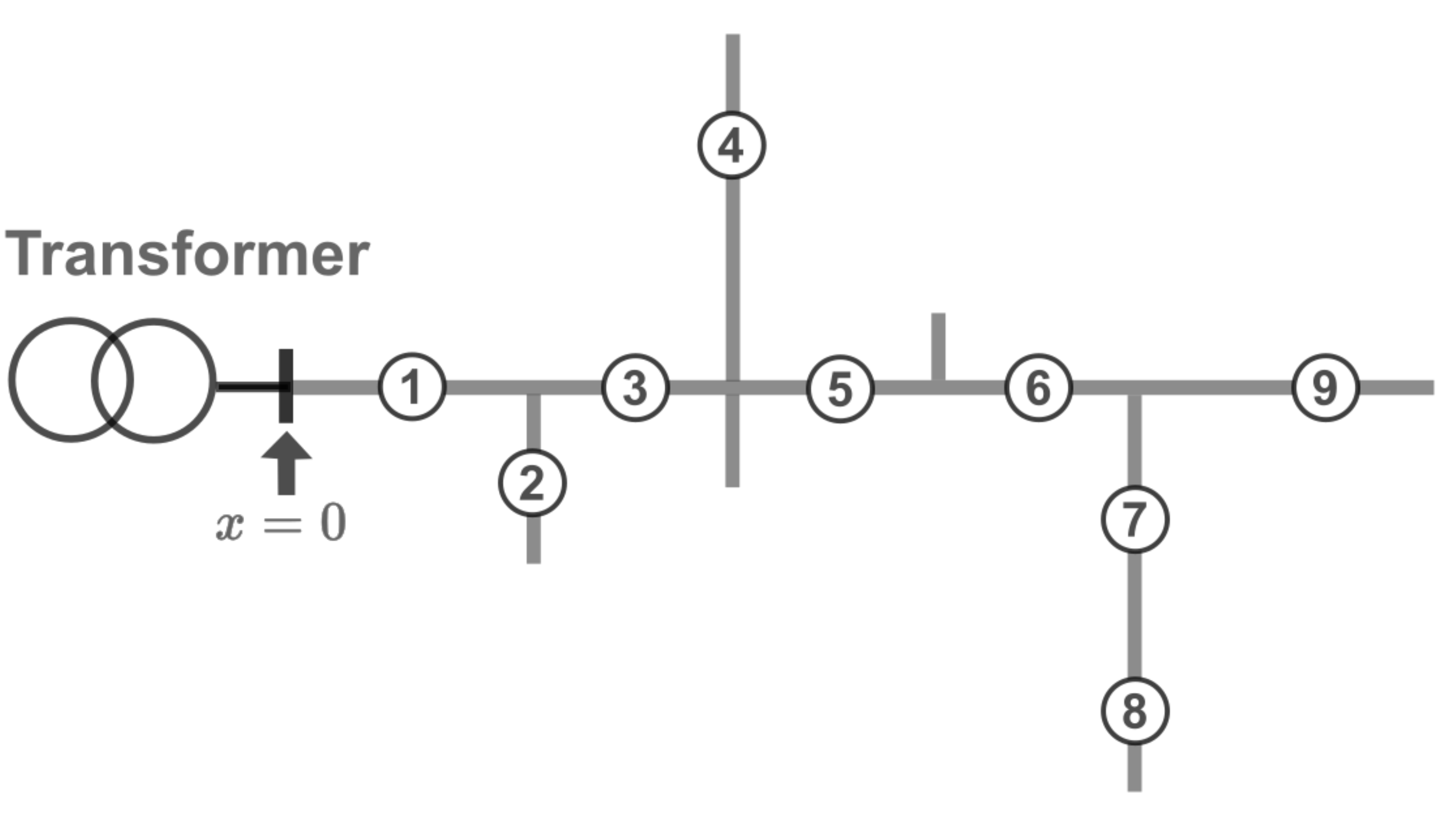}
   \caption{
   Model of multiple feeders
   based on a practical distribution grid in residential area in Japan.
   The 9 EV charging stations virtually installed and denoted by
   \emph{circled numbers}.
   }
   \label{fig:kanden}
\end{figure}

Second, we evaluate the the asymptotic expansions \eqref{eqn:per} for the practical configuration with multiple feeders and bifurcations shown in Figure~\ref{fig:kanden}.
As in the previous sub-section, for comparison of phase $\theta$, voltage amplitude $v$, and voltage gradient $w$, we consider the the proposed asymptotic expansions \eqref{eqn:per} including the perturbation terms up to 1st to 4th order and the numerical solution of the nonlinear ODE \eqref{eqn:ode}.
The model in Figure~\ref{fig:kanden} is based on a practical distribution feeder of residential area in western Japan and provided by an utility company.
A similar distribution model is used in \cite{Mizuta2} and thus is summarized in \ref{sec:AppD}.


The corresponding power density function is shown in the top of Figure~\ref{fig:sim_kanden}.
The function is constructed in the same way as in the single-feeder model.
In the figure, the \emph{positiveness} implies the \emph{discharging} operation by in-vehicle batteries, and the \emph{negativeness} does their \emph{charging} operation or the power consumption by loads.
The \emph{blue} part on the feeders represents the locations connected to the residential loads through the pole transformers.
In addition to the loads, we assume that each station has 9 EVs for simultaneous charging, where each EV has the rated charging power of 4\,kVA based on \cite{Clement1}.

Figure~\ref{fig:sim_kanden} shows the difference the asymptotic expansion \eqref{eqn:per} up to 4th and the numerical solution of the nonlinear ODE \eqref{eqn:ode} incorporated with the power density function based on the top of Figure~\ref{fig:sim_kanden}.
The simulations for the multiple feeders with bifurcations were performed with the boundary condition \eqref{eqn:boundary}.
The voltage amplitude $v(x)$ and voltage phase $\theta(x)$ at the start of the feeders is set to unity.
Similarly, the voltage gradient $w(x)$ at each end of the feeders is set to zero.
The difference of voltage gradient increases toward the start of the feeders due to the boundary condition at the end of the feeders.
On the other hand, the differences of voltage amplitude and voltage phase increase toward each end of the feeders.
Note that the observation is consistent for the choice of multiple values for $\varepsilon$: $1\times10^{-1},1\times 10^{-2},\ldots,
1\times10^{-8}$.

The practical feeder model is complicated, and hence the quantitative evaluation like Table~\ref{tab:sim} is not straightforward.
Here, we use the two norms of differences in voltage amplitude between the asymptotic expansions and the direct numerical solution, which are similar to the standard $\mathcal{L}^2$ and $\mathcal{L^{\infty}}$ norms of functions.
Although mathematically not rigor, the two norms for the difference $e(x)$ are described as
\begin{equation}
|e(x)|_2 := \sqrt{\int_{\textrm{all feeders}}\{e(x)\}^2{\rm d}x}, \qquad
|e(x)|_\infty :=\max_{x\in\textrm{all feeders}} |e(x)|.
\end{equation}
The $\mathcal{L}^2$-like norm $|e(x)|_2$ implies the RMS quantification of the difference, and the $\mathcal{L}^\infty$-like norm $|e(x)|_\infty$ does the worst-case quantification.
The computational results on the two norms are shown in Table~\ref{tab:kanden}.
It is clearly shown in the table that the $\mathcal{L}^2$- and $\mathcal{L^{\infty}}$-like norms become small as the order increases.
Consequently, the proposed asymptotic expansions are capable of evaluating the practical feeder model in Figure~\ref{fig:kanden}.

\begin{table}[t]
   \centering
   \caption{
   $\mathcal{L}^2$- and $\mathcal{L^{\infty}}$-like norms
   for voltage amplitude differences for
   practical feeder model in Figure~\ref{fig:kanden}
   }
   \label{tab:kanden}
   \begin{tabular}{lcc}\hline
        & $\mathcal{L}^2$-like norm & $\mathcal{L^{\infty}}$-like norm \\\hline
       asymptotic expansion up to 1st-order & 1.9521 & 0.0154 \\\hline
       asymptotic expansion up to 2nd-order & 0.4636 & 0.0037 \\\hline
       asymptotic expansion up to 3rd-order & 0.2471 & 0.0019 \\\hline
       asymptotic expansion up to 4th-order & 0.2113 & 0.0017 \\\hline
   \end{tabular}
\end{table}

\begin{figure}[h]
   \centering
   \includegraphics[width=0.85\textwidth]{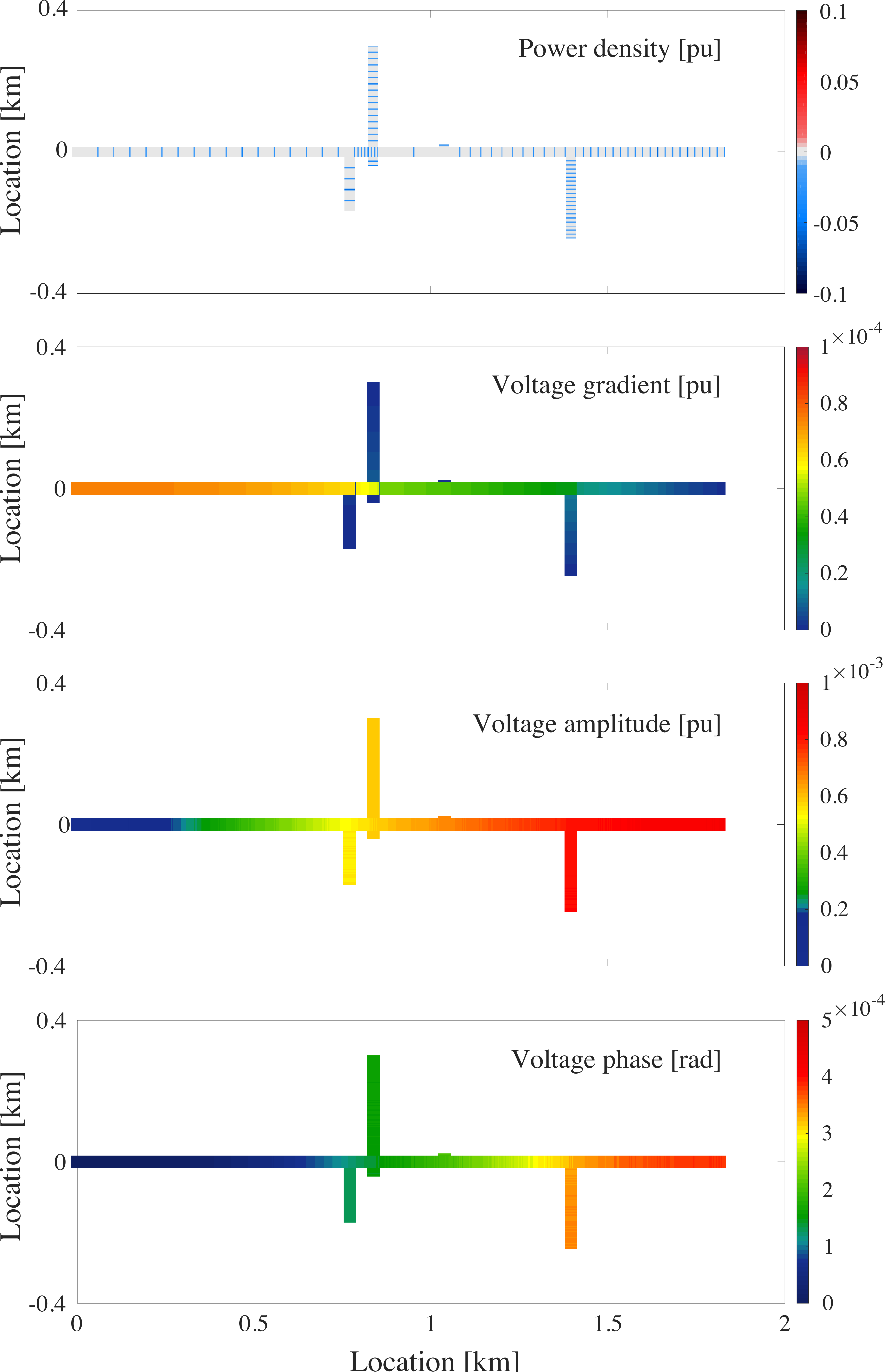}
   \caption{
   Visualization of the differences between asymptotic expansion \eqref{eqn:per} up to 4th and the nonlinear ODE \eqref{eqn:ode}.
   The power density function $p(x)$ used for simulations is also visualized on the top.
   }
   \label{fig:sim_kanden}
\end{figure}

\section{Concluding Remarks}
\label{sec:outro}

Motivated by recent electrification of vehicle and its impact to power distribution grids, in this paper we revisited the TPBV formulation of nonlinear ODE for the assessment problem of distribution voltage profiles.
Its asymptotic assessment was newly proposed by applying the regular perturbation technique in ODEs, and its effectiveness was established with numerical simulations of the simple and practical configurations of the power distribution grid.
The key derivation is the asymptotic expansion \eqref{eqn:per} of solutions of the nonlinear TPBV problem and provides an analytical insight that can explain the impact of EV charging on the distribution voltage profile.

Several remarks on the work in this paper are presented.
First, it is desirable to prove any convergence theorem for the asymptotic expansion.
Our numerics suggest that the accuracy of asymptotic expansion could be improved by including higher-order perturbation terms.
Second, it is of scientific interest and technological significance to consider the impact of the time-dependent variation of EV charging on the distribution voltage profile.
Third, it is also interesting to connect the asymptotic assessment with the regulation of distribution voltage profile against such variation.

\section*{Acknowledgement}
The authors would like to thank Mr. Shota Yumiki (Osaka Prefecture University) for valuable discussions on the work presented in this paper.

\appendix
\def\thesection{Appendix \Alph{section}}
\section{Proof of Theorem~\ref{thm:TPBV}}
\label{sec:AppA}

In this proof, we consider the original nonlinear ODE \eqref{eqn:ode} with \eqref{eqn:hoge}, where $p(x),q(x)\in C^0[0,L]$, and $\theta,s,w\in\mathbb{R}$ and $v\in\mathbb{R}_{>0}$ (set of all positive real numbers).
The current proof is devoted to the existence of solutions of the nonlinear TPBV problem described by \eqref{eqn:ode} and \eqref{eqn:ode_boundary}.

First of all, we consider the differential equation \eqref{eqn:ds/dx} for determining $s(x)$, which is an IV problem.
Because of $p(x),q(x)\in C^0[0,L]$, the right-hand side of \eqref{eqn:ds/dx} can be explicitly integrated in $x$, and $s(x)$ is expressed in a self-consistent manner as follows:
\[
s(x) = s(L)+\int_{L}^{x}\frac{Bp(\xi)-Gq(\xi)}{Y^2}\dd\xi \qquad \forall x\in[0,L].
\]
Thus, $s(x)$ is unique and $C^1$.

Next, in order to determine $v(x)$ and $w(x)$, it is necessary consider the nonlinear TPBV problem as
\begin{equation}
\left.
\begin{aligned}
\frac{\dd v}{\dd x} &= w\\
\frac{\dd w}{\dd x} &= \frac{s(x)^2}{v^3}-\varepsilon\frac{G\tilde{p}(x)+B\tilde{q}(x)}{v(G^2+B^2)}
\end{aligned}
\right\}, \quad \forall x\in[0,L],
\label{eqn:ode_vw}
\end{equation}
with
\begin{equation}
v(0)=1, \quad w(L)=0.
\end{equation}
Since $\tilde{p}(x)$, $\tilde{q}(x)\in C^0[0,L]$ and $s(x)\in C^1[0,L]$, the right-hand sides of (\ref{eqn:ode_vw}) have continuous first derivatives with respect to $v$ and $w$ for $(x,v,w)\in {\cal D}$ (an open set in $[0,L]\times(\mathbb{R}_{>0}\times\mathbb{R})$).
In the TPBV problem, we fix $\varepsilon$ as non-negative.

To consider the existence of solutions of the nonlinear TPBV problem \eqref{eqn:ode_vw}, let us define the initial-value problem for (\ref{eqn:ode_vw}) as
\begin{equation}
\left.
\begin{aligned}
\frac{\dd v}{\dd x} &= w\\
\frac{\dd w}{\dd x} &= \frac{s(x)^2}{v^3}-\lambda\frac{G\tilde{p}(x)+B\tilde{q}(x)}{v(G^2+B^2)}
\end{aligned}
\right\},
\label{eqn:ode2}
\end{equation}
with
\begin{equation}
v(0)=1, \quad w(0)=\eta,
\end{equation}
where $\eta$ is picked up from $\cal D$.
The parameter $\lambda$ is picked up from an open interval in $\mathbb{R}$ including $0$.
According to the standard theorems of existence and uniqueness of solutions for IV problems \cite{Hale}, there exists an unique solution $(v(x, 0, (1,\eta), \lambda),w(x, 0, (1,\eta), \lambda))$ of (\ref{eqn:ode2}) passing through $(0,(1,\eta))$.
This solution can be extend to $x=L$.
Also, from the standard theorem on the dependence of solutions on parameters and initial data \cite{Hale},  the solution $w(x, 0, (1,\eta), \lambda))$ is continuously differentiable with respect to $\eta$ and $\lambda$ in its domain of definition.
At $\lambda=0$ (as $\varepsilon\to +0$), because of $s(x)=0,~\forall x\in[0,L]$ (see the ODE \eqref{eqn:ds/dx} for $s$), the solution $w(x, 0, (1,\eta), 0)$ is exactly $\eta$.

Now, we are in a position to prove the existence of solutions of the nonlinear TPBV problem \eqref{eqn:ode_vw}.
For this, we define
\begin{equation}
\phi(\eta,\lambda):=w(L, 0, (1,\eta), \lambda).
\end{equation}
The proof is that we find a solution of
\begin{equation}
\phi(\eta,\lambda)=0,\qquad \eta\in{\cal D},~\lambda>0.
\label{eqn:goal}
\end{equation}
For this, we use the implicit function theorem \cite{Hale}.
First, we see $\phi(\eta,0)=\eta=0$.
From above, the derivatives $\DD\phi/\DD\eta$ and $\DD\phi/\DD\lambda$ are continuous in an open set including $(\eta,\lambda)=(0,0)$.
In addition, the value of $\DD\phi/\DD\eta$ estimated at $(\eta,\lambda)=(0,0)$ is exactly one (not zero).
Therefore, from the implicit function theorem, there exists a map $\eta^\ast$ from an open interval including $0$ to $\mathbb{R}$  such that $\phi(\eta^\ast(\lambda),\lambda)=0$, $\eta^\ast(0)=0$, and $(\dd\eta^\ast/\dd\lambda)_{\lambda=0}\neq 0$.
Because of the open interval, this implies that there exists a solution of (\ref{eqn:goal}) for a positive $\lambda$, i.e. $\varepsilon$.
This proves that there exists $\varepsilon>0$ such that the nonlinear TPBV problem \eqref{eqn:ode_vw} has a solution.
The solution, simply denoted by $v(x),w(x)$, is $C^1$ from the above argument of IV problems on ODEs.

Finally, since $s(x),v(x)$ are $C^1$ and $v(x)>0$, from the IV problem described by \eqref{eqn:dt/dx} and $\theta(0)=0$, $\theta(x)$ exists for $x\in[0,L]$ uniquely and is $C^1$.
By collecting all the statements above, it follows that there exists $\varepsilon>0$ such thhat the nonlinear TPBV problem described by \eqref{eqn:ode} and \eqref{eqn:ode_boundary} has a solution of $C^1$ regularity.

\section{Derivation of the Boundary Conditions~\eqref{eqn:perbound}}
\label{sec:AppB}

\begin{figure}
  \centering
  \includegraphics[width=0.4\hsize]{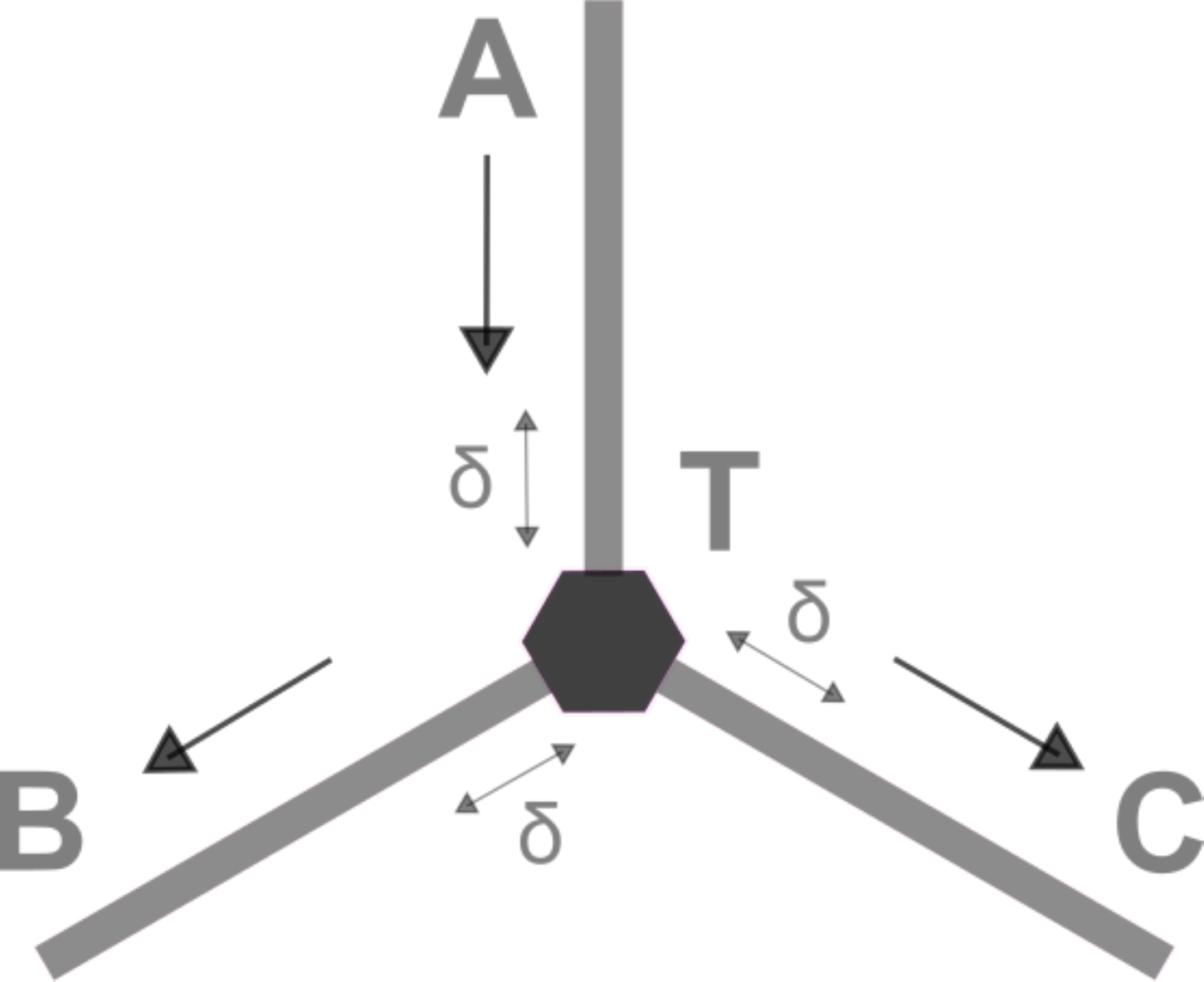}
  \caption{
  Configuration of three feeders A, B, and C connected at one bifurcation point T.
  The arrows represent the reference directions of current flows.
  We consider the neighborhood at the short distance $\delta$ from T.
  }
  \label{fig:branch_app}
\end{figure}

This appendix is devoted to the derivation of the boundary conditions \eqref{eqn:perbound} for the first order.
The derivation for higher-order cases is the same as below.
We refer to the bifurcation point in Figure \ref{fig:branch_app} as T.
We also represent the value of $\theta_{j,1}$ at the point separated from T by $\delta\in\mathbb{R}_{>0}$ along the feeder $j\in \{\rm A,B,C\}$, as $\theta_{j,1}(\delta)$ where $\delta=0$ implies $x_j=0$.
The same notation is used for the other dependent variables.
From the continuity of $\theta_{j,1}$ and $v_{j,1}$ at T, the values $\theta\sub{T}$ and $v\sub{T}$ at T are defined as follows:
\begin{align}
  \theta\sub{T,1}
  &:= \lim_{\delta\rightarrow+0}\theta\sub{A,1}(\delta) = \lim_{\delta\rightarrow+0}\theta\sub{B,1}(\delta) = \lim_{\delta\rightarrow+0}\theta\sub{C,1}(\delta), \\
  v\sub{T,1}
  &:= \lim_{\delta\rightarrow+0}v\sub{A,1}(\delta) = \lim_{\delta\rightarrow+0}v\sub{B,1}(\delta) = \lim_{\delta\rightarrow+0}v\sub{C,1}(\delta),
\end{align}
where we assume $v\sub{T,1}>0$ in a practical viewpoint.
Then, from \eqref{eqn:per1}, the values of the remaining variables $s_{j,1}$ and $w_{j,1}$
at T are defined as follows:
\begin{equation}
  \left.
  \begin{array}{ccl}
    \displaystyle s\sub{A,1}(0)
    &:=& \displaystyle -\lim_{\delta\rightarrow+0}
    \frac{\theta\sub{T,1}-\theta\sub{A,1}(\delta)}{\delta}
    \\
    \displaystyle s\sub{B,1}(0)
    &:=& \displaystyle -\lim_{\delta\rightarrow+0}
    \frac{\theta\sub{B,1}(\delta)-\theta\sub{T,1}}{\delta}
    \\
    \displaystyle s\sub{C,1}(0)
    &:=& \displaystyle -\lim_{\delta\rightarrow+0}
    \frac{\theta\sub{C,1}(\delta)-\theta\sub{T,1}}{\delta}
  \end{array}
  \right\},
  \label{eqn:s_abc}
\end{equation}
and
\begin{equation}
  \left.
  \begin{array}{ccl}
    \displaystyle w\sub{A,1}(0) &:=& \displaystyle \lim_{\delta\rightarrow+0}\frac{v\sub{T,1}-v\sub{A,1}(\delta)}{\delta}
    \\
    \displaystyle w\sub{B,1}(0) &:=& \displaystyle \lim_{\delta\rightarrow+0}\frac{v\sub{B,1}(\delta)-v\sub{T,1}}{\delta}
    \\
    \displaystyle w\sub{C,1}(0) &:=& \displaystyle \lim_{\delta\rightarrow+0}\frac{v\sub{C,1}(\delta)-v\sub{T,1}}{\delta}
  \end{array}
  \right\}.
  \label{eqn:w_abc}
\end{equation}
The voltage phasors at $\delta$ are also introduced with the independent variables $\theta$ and $v$: $\dot{V}_{j,1}(\delta)=v_{j,1}(\delta)\ee^{\ii\theta_{j,1}(\delta)}$ for $j\in\{\rm A,B,C\}$.
Then, the following equations are obtained from the first law of Kirchhoff.
\begin{align}
    \frac{\dot{V}_{\rm A,1}-\dot{V}_{\rm T,1}}{\delta\dot{Z}}
    -\frac{\dot{V}_{\rm T,1}-\dot{V}_{\rm B,1}}{\delta\dot{Z}}
    -\frac{\dot{V}_{\rm T,1}-\dot{V}_{\rm C,1}}{\delta\dot{Z}} &=0, \nonumber\\
    \displaystyle \frac{v_{\rm A,1}(\delta)\mathrm{e}^{\ii\{\theta_{\rm A,1}(\delta)-\theta_{\rm T,1}\}}-v_{\rm T,1}}{\delta}
    +\frac{v_{\rm B,1}(\delta)\mathrm{e}^{\ii\{\theta_{\rm B,1}(\delta)-\theta_{\rm T,1}\}}-v_{\rm T,1}}{\delta}
    & \nonumber\\
    +\frac{v_{\rm C,1}(\delta)\mathrm{e}^{\ii\{\theta_{\rm C,1}(\delta)-\theta_{\rm T,1}\}}-v_{\rm T,1}}{\delta} &=0,
  \label{eqn:Kirchhoff}
\end{align}
where $\dot{Z}$ is the impedance per unit length.
According tothe small amount of $\delta$, the trigonometric functions for $\theta_{\rm A,1}(\delta)$ and $\theta_{\rm T,1}$, it
can be written as follows:
\begin{align}
  \cos (\theta_{\rm A,1}-\theta_{\rm T,1}) &\simeq
    \displaystyle 1,
    \label{eqn:cos}
    \\
    \sin (\theta_{\rm A,1}-\theta_{\rm T,1}) &\simeq
    \displaystyle
    \theta_{\rm A,1}(\delta)-\theta_{\rm T,1}.
    \label{eqn:sin}
\end{align}
From \eqref{eqn:cos} and \eqref{eqn:sin}, we expand the first term of the left-hand side of \eqref{eqn:Kirchhoff} as follows:
\begin{align}
  \frac{v_{\rm A,1}(\delta)\mathrm{e}^{\ii\{\theta_{\rm A,1}(\delta)-\theta_{\rm T,1}\}}-v_{\rm T,1}}{\delta}
  & =
  \frac{v_{\rm A,1}(\delta)}{\delta}\biggl\{\cos(\theta_{\rm A,1}(\delta)-\theta_{\rm T,1})+\ii\sin(\theta_{\rm A,1}(\delta)-\theta_{\rm T,1})\biggr\}-\frac{v_{\rm T,1}}{\delta} \nonumber
  \\
  & \simeq
  \frac{v_{\rm A,1}(\delta)-v_{\rm T,1}}{\delta}+\ii v_{\rm A,1}(\delta)\frac{\theta_{\rm A,1}(\delta)-\theta_{\rm T,1}}{\delta}.
\end{align}
By taking the limitation $\delta\rightarrow0$, the above equation can be rewritten from \eqref{eqn:s_abc} and \eqref{eqn:w_abc} to the following:
\begin{align}
  \lim_{\delta\rightarrow+0}
  \displaystyle \frac{v_{\rm A,1}(\delta)\mathrm{e}^{\ii\{\theta_{\rm A,1}(\delta)-\theta_{\rm T,1}\}}-v_{\rm T,1}}{\delta}
  &= -w_{\rm A,1}(0)+\ii v_{\rm T,1}s_{\rm A,1}(0).
\end{align}
The same derivation holds for the feeders B and C, and from \eqref{eqn:Kirchhoff} we have
\begin{align}
  -w_{\rm A,1}(0)+\ii v_{\rm T,1}s_{\rm A,1}(0)+w_{\rm B,1}(0)-\ii v_{\rm T,1}s_{\rm B,1}(0)+w_{\rm C,1}(0)-\ii v_{\rm T,1}s_{\rm C,1}(0)&=0,
  \nonumber\\
  -w_{\rm A,1}(0)+w_{\rm B,1}(0)+w_{\rm C,1}(0)+\ii v_{\rm T,1}\{s_{\rm A,1}(0)-s_{\rm B,1}(0)-s_{\rm C,1}(0)\}&=0.
  \label{eqn:ws}
\end{align}
This clearly show \eqref{eqn:perbound}, namely,
\begin{equation}
  w_{\rm A,1}-w_{\rm B,1}-w_{\rm C,1}=0, \quad
  s_{\rm A,1}-s_{\rm B,1}-s_{\rm C,1}=0.
\end{equation}

\section{Representation of Voltage Regulation Devices}
\label{sec:AppC}

In this section, we consider the new boundary conditions for the nonlinear TPBV problem in the case where a voltage regulation device (Step Voltage Regulator \cite{Kersting}; SVR) is installed in the feeder.
In this case, the voltage amplitude changes discontinuously at the installed location of SVR.
This poses a new boundary condition at the location.
In Figure \ref{fig:simplefeeder}, we consider that a single-phase tap-changing transformer with variable turn ration $n$ ($>0$) is installed at the location $x=\xi\in(0,L)$.
Here, for the simple derivation, we assume no power supply and demand at the location $x=\xi$, that is, $p(\xi)=q(\xi)=0$.
Thus, the following boundary conditions are derived in \cite{Susuki2}:
\begin{equation}
   \left.
   \begin{array}{ccc}
      \displaystyle \theta(\xi-) &=& \theta(\xi+) 
      \\
       v(\xi-) &=& \displaystyle\frac{1}{n}v(\xi+)
      \vspace{1mm}
      \\\displaystyle s(\xi-) &=& s(\xi+) 
      \\
       w(\xi-) &=& n\cdot w(\xi+)
   \end{array}
   \right\}
   \label{eqn:svr}
\end{equation}
where the limit values for the location of the transformer such as $\theta(\xi-)$ and $v(\xi+)$ are represented by $\theta(\xi-)=\displaystyle \lim_{\delta\to-0}\theta(\xi+\delta)$ or $v(\xi+)=\displaystyle \lim_{\delta\to+0}v(\xi+\delta)$.
By a similar manner as above, the condition \eqref{eqn:svr} at the location of SVR is re-written in the asymptotic framework as follows:
\begin{equation}
   \left.
   \begin{array}{ccccccccl}
      \displaystyle \theta_1(\xi-) &=& \theta_1(\xi+), && \theta_2(\xi-) &=& \theta_2(\xi+), && \cdots \qquad
      \vspace{1mm}
      \\\displaystyle v_1(\xi-) &=& \displaystyle\frac{1}{n}v_1(\xi+), && v_2(\xi-) &=& \displaystyle \frac{1}{n}v_2(\xi+), && \cdots \qquad
      \vspace{1mm}
      \\\displaystyle s_1(\xi-) &=& s_1(\xi+), && s_2(\xi-) &=& s_2(\xi+), && \cdots \qquad
      \vspace{1mm}
      \\\displaystyle w_1(\xi-) &=& n\cdot w_1(\xi+), && w_2(\xi-) &=& n\cdot w_2(\xi+), && \cdots \qquad
   \end{array}
   \right\}.
   \label{eqn:svr_per}
\end{equation}
The above derivation for the perturbation terms is almost the same as in \eqref{eqn:perbound} and is hence omitted in this paper.

\section{Detailed Settings of Numerical Demonstrations}
\label{sec:AppD}

The appendix is described the detailed settings of the numerical demonstrations in Section~\ref{sec:numerics} based on Mizuta \emph{et al.} \cite{Mizuta2}.

In both the feeder models, we assume that the secondary voltage of the transformer is set as 6.6\,kV, which is the normal condition in Japan's high-voltage distribution networks.
The condactance and susceptance of each feeder are common and constant in $x$ as in \eqref{eqn:hoge}.
The feeder's resistance (or reactance) are set at 0.227\,$\Omega$/km (or 0.401\,$\rm\Omega$/km).
These values are from \cite{Mizuta2} and based on the standard setting in Japan.
In the following, we use per-unit system \cite{Kersting} for numerical simulations of the nonlinear ODE model.
The conductance $G$ (or susceptance $B$) of the simple model per unit-length is calculated as 3.881 (or 6.856) in per-unit system ($G/B$ is about 5.661$\times10^{-1}$; it is typical in practice).
Similarly, $G$ (or $B$) of the practical model is also calculated as 2.329 (or 4.113).

For the practical feeder model in Figure~\ref{fig:kanden},
it has multiple bifurcation points and 9 charging stations denoted by \emph{circled numbers}.
The sum of the lengths of all the feeders is 2.52\,km.
The secondary voltage at the bank is regulated at 6.6\,kV, and the loading capacity of the bank is set at 20\,MVA.
The model has 103 pole transformers distributed along the feeders.
All the pole transformers are connected to a total of 1001 residential households.
The residential loads used here are based on practical measurement at 19 o'clock in summer in Japan.
The detailed data on the feeders and residential loads could not be published following an agreement with the utility company.


\end{document}